\begin{document}

\title{HPE Transformer:
Learning to Optimize Multi-Group Multicast Beamforming Under Nonconvex QoS Constraints}

\author{Yang~Li
        ~and~Ya-Feng~Liu

\thanks{Part of this work was presented at the IEEE Globecom 2023 \cite{Li_Liu23}.}
\thanks{Y. Li is with the Shenzhen Research Institute of Big Data,
Shenzhen 518172, China
(e-mail: liyang@sribd.cn).}
\thanks{Y.-F. Liu is with
the State Key Laboratory of Scientific and Engineering
Computing, Institute of Computational Mathematics and
Scientific/Engineering Computing, Academy of Mathematics and Systems Science,
Chinese Academy of Sciences, Beijing 100190, China (e-mail: yafliu@lsec.cc.ac.cn).}}

\maketitle
\begin{abstract}
This paper studies the quality-of-service (QoS) constrained
multi-group multicast beamforming \textcolor{black}{design} problem,
where each multicast group is composed of a number of users requiring the same content.
Due to the nonconvex QoS constraints, this problem is nonconvex and NP-hard. While existing optimization-based iterative algorithms
can obtain a suboptimal solution, their iterative nature results in large computational complexity and delay.
To facilitate real-time implementations, this paper proposes a
deep learning-based approach, which consists of
a beamforming structure assisted problem transformation and a customized neural network architecture named
hierarchical permutation equivariance (HPE) transformer.
The proposed HPE transformer is proved to be
permutation equivariant with respect to the users within each multicast group,
and also permutation equivariant with respect to different multicast groups.
Simulation results demonstrate that the proposed HPE transformer outperforms state-of-the-art
optimization-based and
deep learning-based approaches for multi-group multicast beamforming design
in terms of the total transmit power, the constraint violation, and the computational time.
In addition, the proposed HPE transformer achieves pretty good generalization performance on different numbers of
users, different numbers of multicast groups, and different signal-to-interference-plus-noise ratio targets.
\end{abstract}

\begin{IEEEkeywords}
Multi-group multicast beamforming, neural networks, hierarchical permutation equivariance (HPE), self-attention mechanism, transformer model, quality-of-service (QoS) constraints.
\end{IEEEkeywords}


\section{Introduction}
Multicast beamforming has been recognized as an effective transmission technique
for content delivery, in which the users requiring the same content
are partitioned into a multicast group so that the
base station (BS) can serve them using the same beam simultaneously \cite{Sidiropoulos2006}.
This technique has greatly improved both the spectrum and the energy efficiency
of multiuser multi-antenna wireless networks.
Moreover, recent studies have demonstrated its great potential in
improving content distribution and delivery in content-centric cloud radio
access networks~\cite{Tao2016,Li2018,Dai2018,Li20}.

\subsection{Prior Works}
Multicast beamforming design was initially considered for only a single multicast
group \cite{Abdelkader,Kim2011,Wu2013,Lu2017}, and then was extended to a more general multi-group scenario \cite{Karipidis2008,Chang08,Christopoulos,Xiang2013}.
In addition, multicast beamforming design has also been investigated for non-orthogonal
multicast and unicast transmissions \cite{ChenEr2018,Li2019-2,ZZ23}.
One of the most important optimization problems in
multicast beamforming design is to minimize the total transmit power
at the BS subject to a minimum signal-to-interference-plus-noise ratio (SINR) target for each user,
which is referred to as the quality-of-service (QoS) constrained multicast beamforming design problem.
It has been shown in \cite{Sidiropoulos2006} that this problem
is nonconvex and generally NP-hard.
To obtain a suboptimal solution of the QoS constrained multicast beamforming design problem
within polynomial time complexity, a great deal of effort has been made in the literature.
The existing approaches can be roughly divided into
optimization-based and deep learning-based approaches.

\subsubsection{Optimization-Based Approaches}
There are two main categories of optimization-based approaches for solving the QoS constrained multicast beamforming design problem~\cite{liu2024survey}.
One category of approaches is the semi-definite relaxation (SDR) \cite{Sidiropoulos2006},
which relaxes the original problem into a semi-definite programming formulation
\cite{Karipidis2008,Chang08,Christopoulos,Xiang2013},
and then solves the relaxed problem using the interior-point method.
Theoretical analysis shows that the SDR-based approaches can achieve provable approximation
accuracy \cite{Luo07}. However, the SDR requires to square the number of variables, making
its computational complexity grow dramatically
as the problem size increases,
and the resultant performance also deteriorates severely as the number of users increases \cite{Karipidis2008}.

To reduce the computational complexity while maintaining the solution quality, another
category of approaches called the convex-concave procedure (CCP) \cite{Yuille2003}
(also known as the successive convex approximation)
has been proposed for multicast beamforming design \cite{Tran14,Christopoulos15,Sadeghi17,Chen2017,Mohamadi22,ZhangDong22}.
Instead of leveraging a single relaxed problem,
the CCP tackles the original problem by
successively solving a sequence of convex subproblems,
where each convex subproblem is constructed by linearizing the concave part of the nonconvex QoS constraints. Although the CCP is guaranteed to return a stationary point of the
original problem, its computational
complexity is still high, as many convex subproblems are typically solved by the
interior-point method \cite{Tran14,Christopoulos15}.
To further reduce the computational complexity,
the CCP can be combined with the suboptimal zero-forcing scheme \cite{Sadeghi17} or
the alternating direction method of multipliers \cite{Chen2017,Mohamadi22,ZhangDong22}.
While the CCP achieves better performance and lower computational complexity than the SDR,
both of them belong to optimization-based iterative algorithms,
which make them difficult to support real-time implementations.

\subsubsection{Deep Learning-Based Approaches}
Recently, deep learning has shown its great potential in obtaining
real-time solutions for radio resource management \cite{SunHR2018,Liang20,Eisen2020,Shen2021,Guo2021,Yunqi22,Zhang22,Xiao21}.
\textcolor{black}{These approaches learn the input-output mapping function by training a neural network,
which is applicable to various wireless channel
states/realizations
subject to the same distribution.
When the wireless channel states vary, the well-trained neural network can still be used as long as the channel distribution does not change.
Therefore, when the channel distribution does not change,
the real-time execution of the deep learning-based approaches does not include the computationally expensive re-training,
but only needs to re-execute the simple feed-forward operations,
whose computational complexity and delay is much lower than that of the optimization-based iterative algorithms.}
A typical example is to train a fully connected neural network for
learning the mapping function from wireless channels to the solution of beamforming design
and power allocation \cite{SunHR2018,Liang20}.
However, since fully connected neural networks cannot exploit the graph topology in the wireless networks,
they usually require quite a large number of training samples to achieve satisfactory performance.
Moreover, since their input/output dimension depends on a pre-defined problem size,
they need to be retrained whenever the number of users varies.
To incorporate the graph topology in the wireless networks and make the neural network architecture
scalable to different input/ouput dimensions,
recent works investigated the application of the graph neural networks (GNNs) for beamforming design and power allocation
in wireless networks \cite{Eisen2020,Shen2021,Guo2021,Yunqi22}.
\textcolor{black}{As it is generally hard to get the labels for radio resource management,
a prevalent solution is to apply unsupervised learning \cite{Liang20,Eisen2020,Shen2021,Guo2021,Yunqi22,Zhang22} or reinforcement learning \cite{Xiao21},
where the loss function or reward
is constructed directly from the optimization objective.}

While there are many existing \textcolor{black}{deep learning-based approaches for radio resource management}, 
to the best of our knowledge, there is only one recent work proposing a deep learning-based approach
\textcolor{black}{for multi-group multicast beamforming design~\cite{Zhang22}},
which extends the existing GNNs from the unicast beamforming design
to the more complicated max-min fairness multi-group multicast beamforming design.
\textcolor{black}{However, a} direct application of \textcolor{black}{the multi-group multicast GNN (MMGNN)
in \cite{Zhang22}} for the QoS constrained \textcolor{black}{multi-group multicast beamforming design}
will result in poor performance due to the following two reasons.
First, the QoS constraints are nonconvex and more complicated than the peak power constraint in the max-min fairness problem.
Due to the lack of a constraint augmented mechanism in the neural network architecture,
MMGNN cannot well satisfy the nonconvex QoS constraints.
Second, it is highly expected that
the neural network architecture can well generalize on both different numbers of users and different numbers of multicast groups.
Nevertheless, 
MMGNN shares the trainable parameters across different users, making it only well generalize on different numbers of users but have to be retrained once the number of multicast groups varies.

Another powerful neural network architecture is the transformer \cite{Vaswani2017}, which originates from the natural
language processing (NLP).
Actually, the transformer inherits all the desired properties of GNNs and hence can be viewed as a special case of GNNs for fully connected graphs \cite{Ye,Li_HT}.
Specifically, based on an attention mechanism, the transformer can well model the interactions across different nodes of
the input graph, making it quite qualified to learn the relevance among different input components \cite{Li_HT,Li21}.
Furthermore, the self-attention block in the transformer can effectively guarantee the permutation equivariance (PE) property
as in GNNs. In particular, if the indices of any input components
are exchanged, the self-attention block can automatically produce a corresponding permutation
of the original output. This PE property can reduce the trainable
parameter space and avoid a large number of unnecessary permuted training samples \cite{Eisen2020,Shen2021,Guo2021,Yunqi22}.
Last but not the least, due to the sharing of the trainable parameters across different graph nodes,
the well-trained transformer can well generalize on different problem sizes.

\subsection{Main Contributions}
\textcolor{black}{This paper investigates how to design a deep learning-based approach
that can establish the mapping function from wireless channels to the corresponding multicast beamformers.
Therefore, different from the classical optimization-based approaches, the
proposed deep learning-based approach has an extremely short computational time
due to the fast feed-forward operations, making it a more practical solution for real-time implementations.}
Yet instead of directly constructing a neural network for representing the mapping function
from wireless channels to multicast beamformers,
we exploit an inherent multicast beamforming structure \cite{Dong20},
which can effectively decompose the desired mapping function into two separate mapping functions.
The first mapping function is from wireless channels to low-dimensional parameters of each user,
and the second mapping function is from the low-dimensional parameters to multicast beamformers.
Different from the basic PE property, we observe that the two mapping functions inherently have
a \emph{hierarchical} permutation equivariance (HPE)
property, i.e., permutation equivariant with respect to the users within each multicast group, and also permutation equivariant with respect to different multicast groups.


The HPE property provides an effective guidance for the neural network architecture design,
\textcolor{black}{which has also been recently considered for time series forecasting \cite{Umagami23}, where two parallel self-attention blocks are used to extract the relevance among the series in the same class
and among different classes, respectively.
However, it requires to
combine the representation of different series in the same
class before performing self-attention among different classes, and then
repeats the self-attention output so that each series in the same class
has the same representation. Different from the parallel
structure in \cite{Umagami23}, this paper proposes an HPE transformer with
a two-level hierarchical structure,
which does not need to combine and repeat the representation of different users,
making it preserve the individual information of each user
and maintain the desired PE property with respect to different users in the same multicast group.
Specifically,} 
we first propose an encoding block for representing the first mapping function,
consisting of an embedding layer, multiple hierarchical layers, and a de-embedding layer.
Since the encoding block leverages a hierarchical transformer-based structure, we prove its HPE property
based on the basic PE property of the transformer.
Furthermore, we propose a decoding block for representing the second mapping function,
including a solution construction layer and multiple constraint augmented layers.
In particular, the solution construction layer is built on the semi-closed form solution \cite{Dong20} given by the multicast beamforming structure,
and the constraint augmented layers are constructed by unfolding a number of gradient descent steps for deceasing the constraint violation.
The HPE property of the decoding block is also proved based on the multicast beamforming structure.

The main contributions of this work are summarized as follows.
\begin{enumerate}[1)]
\item
\textcolor{black}{We propose a novel perspective on solving the QoS constrained multi-group multicast beamforming design problem, which
incorporates the closed-form solution structure into the neural network architecture design.
Based on the solution structure,
this paper decomposes the mapping function from wireless channels to multicast beamformers
into two separate mapping functions, both of which enjoy the HPE property.
The HPE property provides a key guidance on how to design the neural network architecture.}

\item
We further propose an HPE transformer with an encoding block and a decoding block to represent the corresponding two mapping functions, respectively.
Different from the vanilla transformer, the proposed HPE transformer is proved to enjoy the more structured HPE property, which is desired in the multi-group multicast beamforming design.
This is in sharp contrast to the basic PE property, which is even enforced with respect to any input components no matter whether
they are from the same group. Moreover, due to the judicious design of the decoding block, the nonconvex QoS constraints can be well respected.
This is quite different from most of the existing works for beamforming design subject to simple power constraints.

\item
Simulation results show that the proposed HPE transformer achieves much lower transmit power and
much smaller constraint violation
than state-of-the-art deep learning-based approaches for multi-group multicast beamforming design. Moreover, due to the fast feed-forward operations,
the computational time of the proposed HPE transformer is much less than that of the optimization-based iterative algorithms. Finally,
the proposed HPE transformer is demonstrated with pretty good generalization performance under different
numbers of users, different numbers of multicast groups, and different SINR targets.
\end{enumerate}

The remainder of this paper is organized as follows. System model and problem formulation
are introduced in Section II.
An HPE property is revealed based on the inherent multicast beamforming structure in Section III.
To guarantee the HPE property, an HPE transformer is designed in Section IV.
Simulation results are provided in Section V. Finally,
Section VI concludes the paper.

Throughout this paper, scalars, vectors, and matrices are denoted by lower-case letters, lower-case bold letters, and upper-case bold letters, respectively. The real and complex domains are denoted by $\mathbb{R}$ and $\mathbb{C}$, respectively.
We denote the transpose, conjugate transpose, inverse,
real part, and imaginary part
of a vector/matrix by $(\cdot)^\mathrm{T}$, $(\cdot)^\mathrm{H}$, $(\cdot)^{-1}$, $\Re(\cdot)$, and $\Im(\cdot)$, respectively.
The $N\times N$ identity matrix and the length-$N$ all-one vector are
denoted as $\mathbf{I}_N$ and $\mathbf{1}_N$, respectively.
The column vectorization of a matrix is denoted as $\text{vec}(\cdot)$, the operation
$\otimes$ denotes the Kronecker product, and $\text{ReLU}(\cdot)$ denotes $\max(\cdot,0)$.

\section{System Model and Problem Formulation}
Consider a downlink multiuser system, where
a BS equipped with $N$ antennas serves $M$
multicast groups of users. Each group $m\in\mathcal{M}\triangleq\{1,2,...,M\}$
consists of $K_m$ single-antenna users, which require a common message from the BS.
The set of the user indices in group $m$
is denoted as $\mathcal{K}_m\triangleq\left\{1,2,...,K_m\right\}$ for all $m\in\mathcal{M}$.
Different multicast groups are disjoint, which means that
the total number of users is $K=\sum_{m=1}^M K_m$.

\textcolor{black}{We adopt a quasi-static block fading
channel model, where the channels between the BS
and users remain constant within each coherence block, but
may vary among different coherence blocks.}
Let $\mathbf{h}_{m,k}\in\mathbb{C}^N$ denote the
channel vector from the BS to
the $k$-th user in the $m$-th multicast group.
Let $s_m$ denote the multicast message intended for
the $m$-th multicast group with $\mathbb{E}\left[
\vert s_m \vert^2\right]=1$,
and $\mathbf{w}_m\in\mathbb{C}^N$ denote the
corresponding multicast beamforming vector.
\textcolor{black}{The BS multicasts the common message $s_m$
to the users in the $m$-th multicast group for $m\in\mathcal{M}$,
and the $M$ different messages are transmitted with the $M$
corresponding beamformers simultaneously.}
The received signal at the $k$-th user in the $m$-th multicast group is given by
\begin{eqnarray}\label{received signal}
y_{m,k} =
\mathbf{h}_{m,k}^{\mathrm{H}}\mathbf{w}_ms_m+
\sum_{j=1,j\neq m}^{M} \mathbf{h}_{m,k}^{\mathrm{H}}\mathbf{w}_js_j
+n_{m,k},\nonumber\\ \forall~ k\in \mathcal{K}_m,~~\forall~ m\in\mathcal{M},
\end{eqnarray}
where $n_{m,k}$ is the additive white Gaussian
noise with zero mean and variance $\sigma_{m,k}^2$.
Correspondingly, the total transmit power at
the BS is given by $\sum_{m=1}^M\left\Vert\mathbf{w}_{m}\right\Vert_2^2$, and
the SINR of
the $k$-th user in the $m$-th multicast group
can be written as
\begin{eqnarray}\label{SINR}
\text{SINR}_{m,k} =
\frac{\left\vert\mathbf{h}_{m,k}^{\mathrm{H}}\mathbf{w}_m\right\vert^2}
{\sum_{j=1,j\neq m}^{M} \left\vert\mathbf{h}_{m,k}^{\mathrm{H}}\mathbf{w}_j\right\vert^2+\sigma_{m,k}^2},
\nonumber\\ \forall~ k\in \mathcal{K}_m,~~\forall~ m\in\mathcal{M}.
\end{eqnarray}
To minimize the total transmit power while satisfying
the QoS constraint of each user, the multi-group multicast beamforming design
problem can be formulated as\textcolor{black}{{\footnote{\textcolor{black}{It is interesting to consider the robust beamforming design with the imperfect channel state information when the perfect channel state information is hard to obtain \cite{ZZ23}. We leave the investigation of
the problem in the case of
the imperfect channel information as the future work.}}}}
\begin{subequations}\label{MGMB}
\begin{equation}\label{MGMB_power}
\min_{\mathbf{W}}~
\sum_{m=1}^M\left\Vert\mathbf{w}_{m}\right\Vert_2^2~~~~~~~~~~~~~~~~~~~~~~~~~~~~~~~~~
\end{equation}
\begin{eqnarray}\label{MGMB_SINR}
\text{s.t.}\
~~\frac{\left\vert\mathbf{h}_{m,k}^{\mathrm{H}}\mathbf{w}_m\right\vert^2}
{\sum_{j=1,j\neq m}^{M} \left\vert\mathbf{h}_{m,k}^{\mathrm{H}}\mathbf{w}_j\right\vert^2+\sigma_{m,k}^2}\geq
\gamma_m,\nonumber\\
\forall~ k\in \mathcal{K}_m,~~\forall~ m\in\mathcal{M},
\end{eqnarray}
\end{subequations}
where $\mathbf{W}\triangleq\left[\mathbf{w}_1, \mathbf{w}_2, \ldots, \mathbf{w}_M\right]$
and $\gamma_m$ is the SINR target of the users in the $m$-th multicast group.

Problem \eqref{MGMB} is nonconvex due to the
non-convexity of the QoS constraints in \eqref{MGMB_SINR}.
In fact, it is even NP-hard to find a feasible solution satisfying
\eqref{MGMB_SINR} \cite{Murty1987}.
Existing approaches for solving problem \eqref{MGMB}
include the SDR \cite{Sidiropoulos2006}
and the CCP \cite{Yuille2003}, which however lead to squaring the number of variables
or successively solving a sequence of non-trivial convex optimization problems.
Both of the above approaches require to solve problem \eqref{MGMB} with a lot of iterations
in an instance-by-instance manner.
Once the wireless channels vary, these optimization-based iterative algorithms have to be
re-executed, which induces tremendous computational complexity and delay.

In the following two sections,
instead of pursuing optimization-based iterative algorithms,
we propose a deep learning-based method
for solving problem \eqref{MGMB} by constructing a mapping function
from wireless channels to multicast beamformers.
Once the mapping function is obtained, it can infer the multicast
beamformers corresponding to any wireless channels by fast feed-forward computations.
The proposed method consists of a
beamforming structure assisted problem transformation (Section~III)
and a customized neural network architecture (Section~IV).

\section{Beamforming Structure with HPE \textcolor{black}{Property}}
In this section, we exploit an inherent multicast beamforming structure to
establish a problem transformation for the original problem \eqref{MGMB}.
\textcolor{black}{The transformed problem reveals an HPE
property, which provides a key guidance on how to design the neural network architecture.
This observation motivates us to design the HPE transformer that well incorporates the HPE property, making the proposed deep learning-based approach significantly outperform the existing works
(as shown in Section V).}

\subsection{Problem Transformation Based on Beamforming Structure}
While problem \eqref{MGMB} is nonconvex and NP-hard,
its optimal solution can be written in the form of a weighted minimum mean squared error
filter \cite{Dong20}, which is given by
\begin{eqnarray}\label{structure}
\mathbf{w}_{m} = \left(\mathbf{I}_N+\sum_{j\in\mathcal{M}}
\sum_{k\in\mathcal{K}_j}
\lambda_{j,k}\gamma_{j}\mathbf{h}_{j,k}\mathbf{h}_{j,k}^\mathrm{H}\right)^{-1}\mathbf{H}_m\boldsymbol{\alpha}_m,\nonumber\\
\forall~ m\in\mathcal{M},
\end{eqnarray}
where $\mathbf{H}_m\triangleq
\left[\mathbf{h}_{m,1},\mathbf{h}_{m,2},\ldots,\mathbf{h}_{m,K_m}\right]$ is the channel matrix of the users in the $m$-th
multicast group,
$\boldsymbol{\alpha}_m\triangleq\left[\alpha_{m,1},\alpha_{m,2},\ldots,\alpha_{m,K_m}\right]^\mathrm{T}$ is a corresponding complex weight vector,
and $\boldsymbol{\lambda}_m\triangleq\left[\lambda_{m,1},\lambda_{m,2},\ldots,\lambda_{m,K_m}\right]^\mathrm{T}$
is a corresponding real vector.
Interestingly, $\mathbf{H}_m\boldsymbol{\alpha}_m$ in \eqref{structure}
can be viewed as a weighted combination of the user channels in the $m$-th
multicast group, where each element in $\boldsymbol{\alpha}_m$ corresponds to the weight of each user channel. Moreover, each $\lambda_{m,k}\geq 0$ can be interpreted as
the Lagrange multiplier corresponding to the SINR constraint of
the $k$-th user in the $m$-th multicast group.

The expression of $\mathbf{w}_{m}$ in \eqref{structure}
provides a semi-closed form solution for problem \eqref{MGMB}.
Although the unknown parameters
$\left\{\boldsymbol{\alpha}_m\right\}_{m\in\mathcal{M}}$
and $\left\{\lambda_{m,k}\right\}_{k\in\mathcal{K}_m,m\in\mathcal{M}}$
still need to be determined,
the parameter dimension is much smaller than
the original variable dimension especially when the number of antennas $N$ is large \cite{Dong20}.
Compared with the original problem \eqref{MGMB} with
$MN$ complex variables in $\mathbf{W}$ (i.e., $2MN$ real variables),
the semi-closed form solution in \eqref{structure} contains
only $K=\sum_{m=1}^M K_m$ complex parameters in $\left\{\boldsymbol{\alpha}_m\right\}_{m\in\mathcal{M}}$
and $K$ real parameters in $\left\{\lambda_{m,k}\right\}_{k\in\mathcal{K}_m,m\in\mathcal{M}}$,
reducing the total number of unknowns from $2MN$ to $3K$.

The inherent low-dimensional solution structure in \eqref{structure}
not only results in lower computational complexity
for obtaining the multicast beamforming solution when $N$ is large \cite{Dong20},
but also provides an effective guidance for constructing the mapping function
from wireless channels to multicast beamformers.
Let $\mathbf{H}\triangleq\left[\mathbf{H}_1,\mathbf{H}_2,\ldots,\mathbf{H}_M\right]$,
$\boldsymbol{\alpha}\triangleq\left[\boldsymbol{\alpha}_1^\mathrm{T},\boldsymbol{\alpha}_2^\mathrm{T},\ldots,\boldsymbol{\alpha}_M^\mathrm{T}\right]^\mathrm{T}$,
and $\boldsymbol{\lambda}\triangleq\left[\boldsymbol{\lambda}_1^\mathrm{T},\boldsymbol{\lambda}_2^\mathrm{T},\ldots,\boldsymbol{\lambda}_M^\mathrm{T}\right]^\mathrm{T}$,
respectively.
Regarding the parameters $\left(\boldsymbol{\alpha}, \boldsymbol{\lambda}\right)$ as intermediate variables,
the original problem \eqref{MGMB} can be reformulated as
\begin{subequations}\label{transform}
\begin{equation}\label{transform_power}
\min_{f(\cdot), g(\cdot, \cdot, \cdot)}~~
\sum_{m=1}^M\left\Vert\mathbf{w}_{m}\right\Vert_2^2~~~~~~~~~~~~~~~~~~~~~~~~~~~
\end{equation}
\begin{equation}\label{transform_f}
\text{s.t.}\
~~~~\left(\boldsymbol{\alpha}, \boldsymbol{\lambda}\right)=f\left(\mathbf{H}\right),~~~~~~~~~~~~~~~~~~~~
\end{equation}
\begin{equation}\label{transform_g}
~~~~~~~~~~~~~~~~~~~~~~~~~~\mathbf{W}=g\left(\mathbf{H}, \boldsymbol{\alpha}, \boldsymbol{\lambda}\right)~~\text{with}~~\eqref{MGMB_SINR}~~\text{satisfied},
\end{equation}
\end{subequations}
where $f(\cdot):\mathbb{C}^{N\times K}\rightarrow
\mathbb{C}^{K}\times\mathbb{R}^{K}_{+}$ is a mapping function
from $\mathbf{H}$ to $\left(\boldsymbol{\alpha}, \boldsymbol{\lambda}\right)$,
$g(\cdot, \cdot, \cdot): \mathbb{C}^{N\times K}\times
\mathbb{C}^{K}\times\mathbb{R}^{K}_{+}\rightarrow
\mathbb{C}^{N\times M}$ is a mapping function from
$\left(\mathbf{H}, \boldsymbol{\alpha}, \boldsymbol{\lambda}\right)$ to $\mathbf{W}$,
and the final output $\mathbf{W}$
satisfies all the SINR constraints in $\eqref{MGMB_SINR}$.

Problem \eqref{transform} decomposes the desired mapping function from $\mathbf{H}$ to $\mathbf{W}$
into two separate mappings, i.e., $f(\cdot)$ and $g(\cdot, \cdot, \cdot)$.
On one hand, the intermediate variables $\left(\boldsymbol{\alpha}, \boldsymbol{\lambda}\right)$
have a lower dimension than $\mathbf{W}$, making the
optimization of $f(\cdot)$ much easier than that of the
mapping function from $\mathbf{H}$ to $\mathbf{W}$.
On the other hand, the beamforming structure in \eqref{structure} also provides
an effective guidance for the optimization of $g(\cdot, \cdot, \cdot)$.
In particular, if the SINR constraints in $\eqref{MGMB_SINR}$ are not considered,
$g(\cdot, \cdot, \cdot)$ can be even directly given by \eqref{structure} without any further design.

\vspace{2mm}
\textcolor{black}{\emph{Remark 1:}
Unlike the WMMSE reformulation, which can be solved by unfolding
the iterative procedure of the WMMSE algorithm \cite{Chowdhury21},
the problem transformation \eqref{transform}
is nontrivial to solve by any existing algorithm, let alone unfolding one.
Therefore, instead of unfolding any existing algorithm, we propose an HPE transformer that incorporates an inherent HPE property of the transformed problem \eqref{transform} into the neural network architecture.}

\subsection{HPE Property}
In Section~IV, we will design the neural network architecture for representing
$f(\cdot)$ and $g(\cdot, \cdot, \cdot)$ in \eqref{transform}.
To facilitate the architecture design, we first examine an
HPE property in $f(\cdot)$ and $g(\cdot, \cdot, \cdot)$, respectively.

First, the mapping function from $\mathbf{H}$ to $\left(\boldsymbol{\alpha}, \boldsymbol{\lambda}\right)$
should be independent of both the user's and the group's indices.
To be specific, if the indices of any users within the same multicast group
are exchanged, and/or if the indices of any multicast groups are exchanged,
$f(\cdot)$ should produce a corresponding permutation of the original output.
In other words, the desired $f(\cdot)$ inherently has
an HPE property: 1) permutation equivariant with respect to the indices of the users within each multicast group;
2) and also permutation equivariant with respect to the indices of multicast groups.
This HPE property of $f(\cdot)$ is formally defined as follows.
\newtheorem{property}{Property}
\begin{property}\label{HPE_f}
(HPE property of $f(\cdot)$)
Let $\tilde{\mathbf{H}}_m\triangleq\left[\mathbf{h}_{m,\pi_m(1)},
\mathbf{h}_{m,\pi_m(2)},\ldots,\mathbf{h}_{m,\pi_m(K_m)}\right]$,
$\tilde{\boldsymbol{\alpha}}_m\triangleq\left[\alpha_{m,\pi_m(1)},
\alpha_{m,\pi_m(2)},\ldots,\alpha_{m,\pi_m(K_m)}\right]^\mathrm{T}$,
and $\tilde{\boldsymbol{\lambda}}_m\triangleq\left[\lambda_{m,\pi_m(1)},
\lambda_{m,\pi_m(2)},\ldots,\lambda_{m,\pi_m(K_m)}\right]^\mathrm{T}$,
where $\pi_m(\cdot)$ is a permutation of the indices in $\mathcal{K}_m$.
Furthermore, let $\tilde{\mathbf{H}}\triangleq\left[\tilde{\mathbf{H}}_{\pi_0(1)},
\tilde{\mathbf{H}}_{\pi_0(2)},\ldots,\tilde{\mathbf{H}}_{\pi_0(M)}\right]$,
$\tilde{\boldsymbol{\alpha}}\triangleq\left[\tilde{\boldsymbol{\alpha}}_{\pi_0(1)}^\mathrm{T},
\tilde{\boldsymbol{\alpha}}_{\pi_0(2)}^\mathrm{T},\ldots,\tilde{\boldsymbol{\alpha}}_{\pi_0(M)}^\mathrm{T}\right]^\mathrm{T}$,
and $\tilde{\boldsymbol{\lambda}}\triangleq\left[\tilde{\boldsymbol{\lambda}}_{\pi_0(1)}^\mathrm{T},
\tilde{\boldsymbol{\lambda}}_{\pi_0(2)}^\mathrm{T},\ldots,\tilde{\boldsymbol{\lambda}}_{\pi_0(M)}^\mathrm{T}\right]^\mathrm{T}$,
where $\pi_0(\cdot)$ is a permutation of the indices in $\mathcal{M}$.
The function $f(\cdot)$ is said to have the HPE property if
\begin{eqnarray}\label{HPE1}
\left(\tilde{\boldsymbol{\alpha}}, \tilde{\boldsymbol{\lambda}}\right)=f\left(\tilde{\mathbf{H}}\right),
~~\forall~ \pi_{m}(\cdot):\mathcal{K}_m\rightarrow \mathcal{K}_m,
\nonumber\\
\forall~ m\in\mathcal{M},~~
\forall~ \pi_{0}(\cdot):\mathcal{M}\rightarrow \mathcal{M},
\end{eqnarray}
as long as $\left(\boldsymbol{\alpha}, \boldsymbol{\lambda}\right)=f\left(\mathbf{H}\right)$.
\end{property}

Similarly, the mapping function $g(\cdot, \cdot, \cdot)$
from $\left(\mathbf{H}, \boldsymbol{\alpha}, \boldsymbol{\lambda}\right)$ to $\mathbf{W}$
also inherently has an HPE property\footnote{
Notice that $g(\cdot, \cdot, \cdot)$ is permutation invariant with respect to the users within each multicast group, which means that
its outputs will not change no matter how the order of the users within each multicast group changes. This is a little bit different
from the HPE property of $f(\cdot)$, where the order of the outputs will change in the same way as that of the users within each multicast group.}, which is formally defined as follows.
\begin{property}\label{HPE_g}
(HPE property of $g(\cdot, \cdot, \cdot)$)
Let $\tilde{\mathbf{W}}\triangleq\left[\mathbf{w}_{\pi_0(1)},\mathbf{w}_{\pi_0(2)},\ldots,\mathbf{w}_{\pi_0(M)}\right]$.
The function $g(\cdot, \cdot, \cdot)$ is said to have the
HPE property if
\begin{eqnarray}\label{HPE2}
\tilde{\mathbf{W}}=g\left(\tilde{\mathbf{H}}, \tilde{\boldsymbol{\alpha}}, \tilde{\boldsymbol{\lambda}}\right),
~~\forall~ \pi_{m}(\cdot):\mathcal{K}_m\rightarrow \mathcal{K}_m,\nonumber\\
\forall~ m\in\mathcal{M},~~
\forall~ \pi_{0}(\cdot):\mathcal{M}\rightarrow \mathcal{M},
\end{eqnarray}
as long as $\mathbf{W}=g\left(\mathbf{H}, \boldsymbol{\alpha}, \boldsymbol{\lambda}\right)$.
\end{property}
\vspace{2mm}


\section{HPE Transformer and Its Architecture Design}
In this section, we propose a customized neural network architecture for representing the mapping function
from $\mathbf{H}$ to $\mathbf{W}$ for problem \eqref{transform}.
As shown in Fig.~\ref{HPET}, the overall architecture consists of an encoding block
and a decoding block, which are used to
represent $f\left(\cdot\right)$ and $g\left(\cdot, \cdot, \cdot\right)$, respectively.
From Section III-B, we have known that both $f\left(\cdot\right)$ and $g\left(\cdot, \cdot, \cdot\right)$
inherently have the HPE property.
To incorporate this property into the neural network architecture,
we first briefly review the vanilla transformer, which provides a basic PE property.

\begin{figure}[t!]
\begin{center}
  \includegraphics[width=0.4\textwidth]{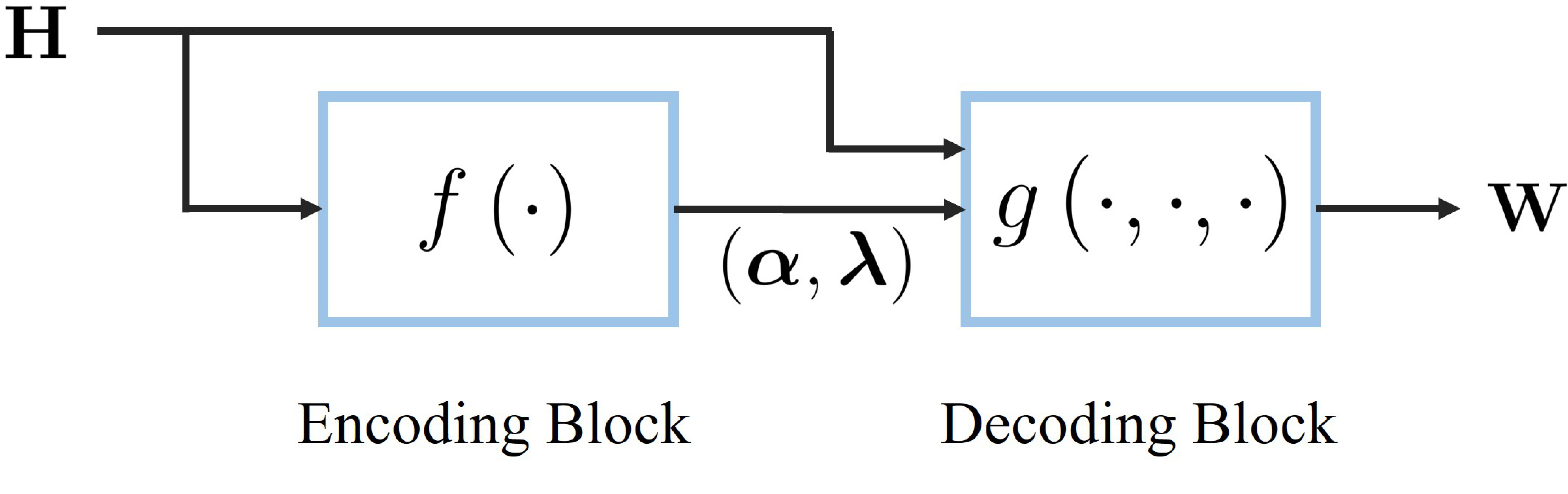}
  \caption{The overall architecture of the proposed HPE transformer $g(\cdot,f(\cdot))$, which consists of an encoding block
  for representing $f(\cdot)$ and a decoding block for representing $g(\cdot,\cdot,\cdot)$.}\label{HPET}
\end{center}
\vspace{-0.2cm}
\end{figure}

\subsection{Transformer Preliminaries}
The vanilla transformer is built on the self-attention block \cite{Vaswani2017},
which was originally designed for the NLP tasks such as machine translation.
The self-attention block not only extracts the relevance among different input components, but also guarantees the PE property.

As shown in Fig.~\ref{AB}, the self-attention block consists of
a multi-head attention (MHA) operation and a component-wise feed-forward (CFF) operation.
Let $\mathbf{X}\triangleq\left[\mathbf{x}_{1},\mathbf{x}_{2},\ldots,\mathbf{x}_{I}\right]\in
\mathbb{R}^{d\times I}$ denote $I$ input components,
\textcolor{black}{where $d$ denotes the embedding size of each component.}
The corresponding output $\mathbf{Z}\triangleq\left[\mathbf{z}_{1},\mathbf{z}_{2},\ldots,\mathbf{z}_{I}\right]\in\mathbb{R}^{d\times I}$
of the self-attention block is given by
\begin{equation}\label{A1}
\mathbf{Z}=\text{Norm}\left(
\mathbf{Y}+\text{CFF}\left(\mathbf{Y}\right)
\right),
\end{equation}
\begin{equation}\label{A2}
\text{with}~~\mathbf{Y}=\text{Norm}\left(
\mathbf{X}+\text{MHA}\left(\mathbf{X}\right)
\right),~~~~~
\end{equation}
where $\text{Norm}$ represents a normalization step \cite{Ba2016},
$\text{CFF}$ processes each column of $\mathbf{Y}\in\mathbb{R}^{d\times I}$ independently and identically
with a feed-forward operation,
and the operation of $\text{MHA}$ is shown in Appendix~\ref{MHA_expression}.
The above self-attention block given by \eqref{A1} and \eqref{A2}
satisfies the following property.
\begin{property}\label{PE}
(PE Property)
Let $u(\cdot):\mathbb{R}^{d\times I}\rightarrow\mathbb{R}^{d\times I}$ denote
the mapping function from $\mathbf{X}$ to $\mathbf{Z}$ given by \eqref{A1} and \eqref{A2}.
Define $\tilde{\mathbf{X}}\triangleq\left[\mathbf{x}_{\pi(1)},\mathbf{x}_{\pi(2)},\ldots,\mathbf{x}_{\pi(I)}\right]$
and $\tilde{\mathbf{Z}}\triangleq\left[\mathbf{z}_{\pi(2)},\mathbf{z}_{\pi(1)},\ldots,\mathbf{z}_{\pi(I)}\right]$,
where $\pi(\cdot)$ denotes a permutation of the indices in $\mathcal{I}\triangleq
\{1,2,\ldots,I\}$. Then, we always have
\begin{equation}\label{PEu}
\tilde{\mathbf{Z}} = u\left(\tilde{\mathbf{X}}\right),
~~\forall~ \pi(\cdot):\mathcal{I}\rightarrow \mathcal{I}.
\end{equation}
\end{property}
\begin{proof}
\textcolor{black}{Please refer to \cite[Proposition 1]{Li_HT}.}
\end{proof}
\vspace{2mm}

\begin{figure}[t!]
\begin{center}
  \includegraphics[width=0.48\textwidth]{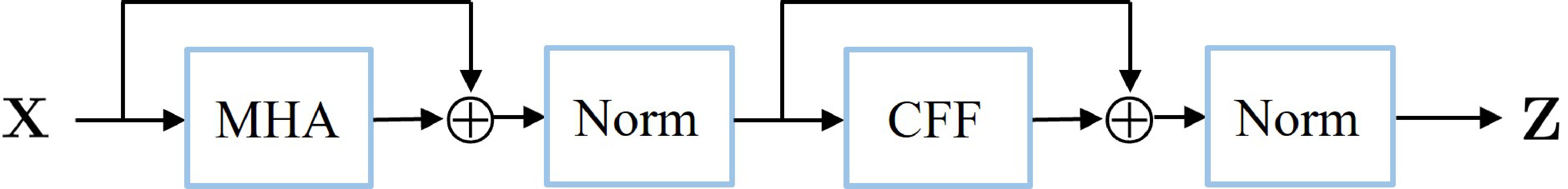}
  \caption{The self-attention block in the vanilla transformer.}\label{AB}
\end{center}
\vspace{-0.2cm}
\end{figure}

Property~\ref{PE} shows that the self-attention block in the vanilla transformer
satisfies a basic PE property.
In the next subsection, we will propose an encoding block for representing
the mapping function $f(\cdot)$, which requires a more structured HPE property.

\vspace{2mm}
\emph{Remark 2:}
The PE property given in \eqref{PEu} is actually a
special case of the more general HPE property in \eqref{HPE1}.
In particular, any mapping function satisfying \eqref{PEu} must also satisfy \eqref{HPE1}.
However, the PE property in \eqref{PEu} is too strict for problem \eqref{transform},
since we cannot enforce the PE property with respect to the users from different multicast groups.
For example, if the channel matrix is permuted as
$\left[\tilde{\mathbf{h}}_{1,1}, \tilde{\mathbf{h}}_{1,2}, \tilde{\mathbf{h}}_{2,1}, \tilde{\mathbf{h}}_{2,2}\right]=\left[\mathbf{h}_{1,1}, \mathbf{h}_{2,1}, \mathbf{h}_{1,2}, \mathbf{h}_{2,2}\right]$, the corresponding output
$\left[\tilde{\boldsymbol{\alpha}}_{1,1}, \tilde{\boldsymbol{\alpha}}_{1,2}, \tilde{\boldsymbol{\alpha}}_{2,1}, \tilde{\boldsymbol{\alpha}}_{2,2}\right]\neq\left[\boldsymbol{\alpha}_{1,1}, \boldsymbol{\alpha}_{2,1}, \boldsymbol{\alpha}_{1,2}, \boldsymbol{\alpha}_{2,2}\right]$.

\subsection{Encoding Block for Representing $f(\cdot)$}
The proposed encoding block for representing $f(\cdot)$
is composed of an embedding layer, $L$ hierarchical layers, and a de-embedding layer.
As shown in Fig.~~\ref{encoding}, the embedding layer transforms the channel matrix $\mathbf{H}$
into an initial embedding $\mathbf{X}^{(0)}\in\mathbb{R}^{d\times K}$,
where the embedding dimension $d$ is a hyper-parameter;
then, the initial embedding $\mathbf{X}^{(0)}$ is updated through
$L$ hierarchical layers to obtain
$\mathbf{X}^{(L)}\in\mathbb{R}^{d\times K}$; finally,
the de-embedding layer transforms $\mathbf{X}^{(L)}$ into
$\left(\boldsymbol{\alpha}, \boldsymbol{\lambda}\right)$.

\begin{figure}[t!]
\begin{center}
  \includegraphics[width=0.48\textwidth]{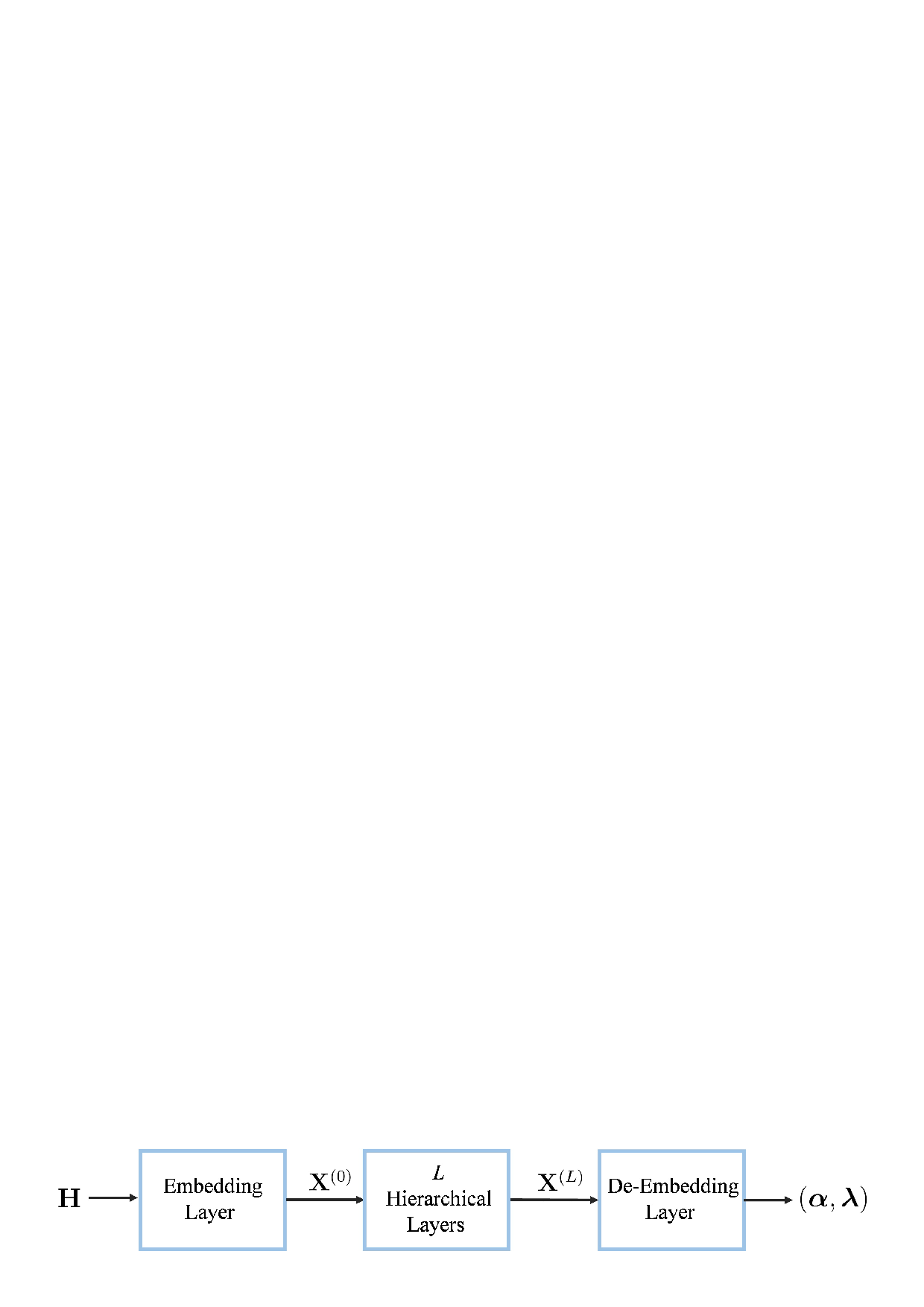}
  \caption{The architecture of the proposed encoding block for representing $f(\cdot)$.}\label{encoding}
\end{center}
\vspace{-0.2cm}
\end{figure}

\subsubsection{Embedding Layer}
The embedding layer first expresses the input $\mathbf{H}\in\mathbb{C}^{N\times K}$
as a real matrix by separating the real and imaginary parts:
\begin{equation}\label{EM1}
\mathbf{H}^{\text{em}}=
\left[{\Re\left(\mathbf{H}\right)}^\mathrm{T},{\Im\left(\mathbf{H}\right)}^\mathrm{T}\right]^\mathrm{T}.
\end{equation}
Then, the embedding layer obtains the initial embedding
$\mathbf{X}^{(0)}\in\mathbb{R}^{d\times K}$
by a linear projection:
\begin{eqnarray}\label{EM2}
\mathbf{X}^{(0)}=
\mathbf{W}^{\text{em}}\mathbf{H}^{\text{em}}+\mathbf{b}^{\text{em}}\otimes \mathbf{1}_{K}^\mathrm{T},
\end{eqnarray}
where $\mathbf{W}^{\text{em}}\in\mathbb{R}^{d\times 2N}$ and
$\mathbf{b}^{\text{em}}\in\mathbb{R}^{d}$
are the trainable matrix and vector, respectively.
In \eqref{EM2}, the initial embedding can be expanded as $\mathbf{X}^{(0)}=\left[\mathbf{X}^{(0)}_1,\mathbf{X}^{(0)}_2,\ldots,\mathbf{X}^{(0)}_M\right]$
and $\mathbf{X}^{(0)}_m=\left[\mathbf{x}^{(0)}_{m,1},\mathbf{x}^{(0)}_{m,2},\ldots,\mathbf{x}^{(0)}_{m,K_m}\right]$,
where $\mathbf{x}^{(0)}_{m,k}\in\mathbb{R}^d$ is the initial embedding of the channel vector $\mathbf{h}_{m,k}\in\mathbb{C}^N$.
It can be seen from $\eqref{EM2}$
that each $\mathbf{h}_{m,k}\in\mathbb{C}^N$ for all $k\in\mathcal{K}_m$ and $m\in\mathcal{M}$ is processed independently and identically
in the embedding layer.

\subsubsection{Hierarchical Layers}
The $\ell$-th hierarchical layer transforms $\mathbf{X}^{(\ell-1)}\in\mathbb{R}^{d\times K}$ into $\mathbf{X}^{(\ell)}\in\mathbb{R}^{d\times K}$, where
the input can be expanded as
$\mathbf{X}^{(\ell-1)}=\left[\mathbf{X}^{(\ell-1)}_1,\mathbf{X}^{(\ell-1)}_2,\ldots,\mathbf{X}^{(\ell-1)}_M\right]$
and $\mathbf{X}^{(\ell-1)}_m=\left[\mathbf{x}^{(\ell-1)}_{m,1},\mathbf{x}^{(\ell-1)}_{m,2},\ldots,\mathbf{x}^{(\ell-1)}_{m,K_m}\right]$.
As illustrated in Fig.~\ref{HIE}, in order
to guarantee the PE property with respect to the users within each multicast group,
we first adopt the self-attention block in the vanilla transformer to process each
$\mathbf{X}^{(\ell-1)}_m$ for all $m\in\mathcal{M}$.
With $u_1^{(\ell)}(\cdot)$
denoting the mapping function of the self-attention block,
we transform $\mathbf{X}^{(\ell-1)}_m\in\mathbb{R}^{d\times K_m}$ into $\mathbf{Z}^{(\ell)}_m\in\mathbb{R}^{d\times K_m}$ by
\begin{equation}\label{H1}
\mathbf{Z}^{(\ell)}_m=u_1^{(\ell)}\left(\mathbf{X}^{(\ell-1)}_m\right),~~\forall~ m\in\mathcal{M},~~\forall~ \ell\in\mathcal{L},
\end{equation}
where $u_1^{(\ell)}(\cdot)$ is shared among different multicast groups,
although the column dimension $K_m$ of the input $\mathbf{X}^{(\ell-1)}_m$ for different $m$ can be different.
This is because $u_1^{(\ell)}(\cdot)$ processes each column of the input matrix identically, making
its column dimension adaptable to different $K_m$.
Since each $\mathbf{X}^{(\ell-1)}_m$ for all $m\in\mathcal{M}$ is processed independently and identically in
\eqref{H1}, the proposed hierarchical layer is
permutation equivariant with respect to multicast groups and
also scale adaptable to different group numbers.

Furthermore, to allow the interaction among different multicast groups,
we adopt another self-attention block to process $\mathbf{Z}^{(\ell)}=\left[\mathbf{Z}^{(\ell)}_1,\mathbf{Z}^{(\ell)}_2,\ldots,\mathbf{Z}^{(\ell)}_M\right]$ by
\begin{equation}\label{H2}
\mathbf{X}^{(\ell)}=u_2^{(\ell)}\left(\mathbf{Z}^{(\ell)}\right),~~\forall~ \ell\in\mathcal{L},
\end{equation}
where
$\mathbf{X}^{(\ell)}=\left[\mathbf{X}^{(\ell)}_1,\mathbf{X}^{(\ell)}_2,\ldots,\mathbf{X}^{(\ell)}_M\right]$
and $\mathbf{X}^{(\ell)}_m=\left[\mathbf{x}^{(\ell)}_{m,1},\mathbf{x}^{(\ell)}_{m,2},\ldots,\mathbf{x}^{(\ell)}_{m,K_m}\right]$.
Compared with $u_1^{(\ell)}(\cdot)$, we use a different subscript for $u_2^{(\ell)}(\cdot)$,
because the trainable parameters in $u_2^{(\ell)}(\cdot)$ can be different from those in $u_1^{(\ell)}(\cdot)$. \textcolor{black}{Since $u_1^{(\ell)}(\cdot)$ guarantees the PE property with respect to the users within each multicast group, and $u_2^{(\ell)}(\cdot)$ further provides the PE property among different
multicast groups, then the hierarchical layers satisfy the HPE property,
whose formal proof is given in Appendix~C.}

\begin{figure}[t!]
\begin{center}
  \includegraphics[width=0.4\textwidth]{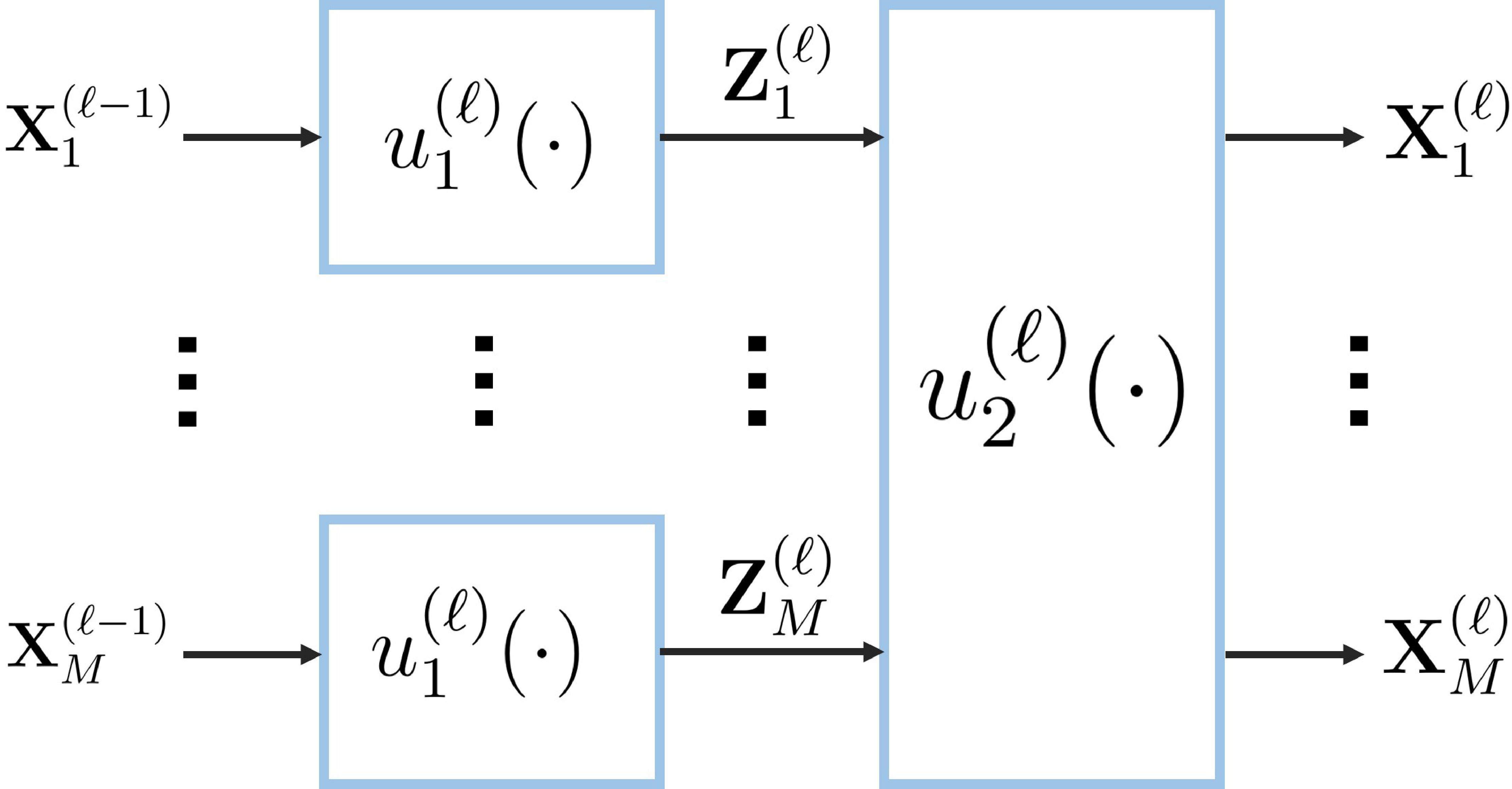}
  \caption{The $\ell$-th hierarchical layer of the proposed encoding block.}\label{HIE}
\end{center}
\vspace{-0.2cm}
\end{figure}

\vspace{2mm}
\emph{Remark 3:}
The hierarchical layers in Fig.~\ref{HIE} \textcolor{black}{are} the key to the HPE property. In particular,
this architecture satisfies the desired HPE property rather than the standard PE property,
making it indispensable to the multi-group multicast beamforming design\textcolor{black}{\footnote{
\textcolor{black}{If we embed the channel and beamforming weight over each antenna individually and let
$\mathbf{X}^{(\ell)}_{m,k}=\left[\mathbf{x}^{(\ell)}_{m,k,1},\mathbf{x}^{(\ell)}_{m,k,2},\ldots,\mathbf{x}^{(\ell)}_{m,k,N}\right]\in\mathbb{R}^{d\times N}$ denote the feature embedding over the $N$ antennas regarding the $k$-th user in group $m$,
we can add a self-attention block $u_0^{(\ell)}(\cdot)$
in front of $u_1^{(\ell)}(\cdot)$ to guarantee the PE property with respect to
the columns of $\mathbf{X}^{(\ell)}_{m,k}$,
which further provides the generalization ability on different numbers of antennas.}}}.
For example, if the input is permuted as
$\left[\tilde{\mathbf{x}}_{1,1}^{(\ell-1)}, \tilde{\mathbf{x}}_{1,2}^{(\ell-1)}, \tilde{\mathbf{x}}_{2,1}^{(\ell-1)}, \tilde{\mathbf{x}}_{2,2}^{(\ell-1)}\right]=\left[\mathbf{x}_{1,1}^{(\ell-1)}, \mathbf{x}_{2,1}^{(\ell-1)}, \mathbf{x}_{1,2}^{(\ell-1)}, \mathbf{x}_{2,2}^{(\ell-1)}\right]$, the output of the $\ell$-th hierarchical layer will not automatically satisfy
$\left[\tilde{\mathbf{x}}_{1,1}^{(\ell)}, \tilde{\mathbf{x}}_{1,2}^{(\ell)}, \tilde{\mathbf{x}}_{2,1}^{(\ell)}, \tilde{\mathbf{x}}_{2,2}^{(\ell)}\right]=\left[\mathbf{x}_{1,1}^{(\ell)}, \mathbf{x}_{2,1}^{(\ell)}, \mathbf{x}_{1,2}^{(\ell)}, \mathbf{x}_{2,2}^{(\ell)}\right]$.
Based on this architecture, the proposed encoding block can guarantee the HPE property in
\eqref{HPE1}, which will be formally proved later.

\subsubsection{De-Embedding Layer}
The de-embedding layer first transforms $\mathbf{X}^{(L)}\in\mathbb{R}^{d\times K}$
into $\mathbf{X}^{\text{de}}\in\mathbb{R}^{3\times K}$ by a linear projection:
\begin{eqnarray}\label{DE1}
\mathbf{X}^{\text{de}}=
\mathbf{W}^{\text{de}}\mathbf{X}^{(L)}+\mathbf{b}^{\text{de}}\otimes \mathbf{1}_{K}^\mathrm{T},
\end{eqnarray}
where $\mathbf{W}^{\text{de}}\in\mathbb{R}^{3\times d}$ and
$\mathbf{b}^{\text{de}}\in\mathbb{R}^3$
are the trainable matrix and vector, respectively.
It can be seen from $\eqref{DE1}$
that each column of $\mathbf{X}^{(L)}$ is processed independently and identically
in the de-embedding layer.
The corresponding output can be expanded as $\mathbf{X}^{\text{de}}=\left[\mathbf{X}^{\text{de}}_1,\mathbf{X}^{\text{de}}_2,\ldots,\mathbf{X}^{\text{de}}_M\right]$
and $\mathbf{X}^{\text{de}}_m=\left[\mathbf{x}^{\text{de}}_{m,1},\mathbf{x}^{\text{de}}_{m,2},\ldots,\mathbf{x}^{\text{de}}_{m,K_m}\right]$,
where $\mathbf{x}^{\text{de}}_{m,k}=\left[x^{\text{de}}_{m,k,1}, x^{\text{de}}_{m,k,2}, x^{\text{de}}_{m,k,3}\right]^\mathrm{T}$ contains the information
of $\left(\alpha_{m,k}, \lambda_{m,k}\right)$ by
\begin{eqnarray}\label{DE2}
\Re\left(\alpha_{m,k}\right)=x^{\text{de}}_{m,k,1},~~
\Im\left(\alpha_{m,k}\right)=x^{\text{de}}_{m,k,2},\nonumber\\
\lambda_{m,k} = \text{ReLU}\left(x^{\text{de}}_{m,k,3}\right),~~\forall~ k\in\mathcal{K}_m,~~\forall~ m\in\mathcal{M},
\end{eqnarray}
where $\text{ReLU}(\cdot)$ is adopted due to $\lambda_{m,k}\geq 0$.

Based on the above embedding layer, hierarchical layers, and de-embedding layer,
we establish the HPE property of the proposed encoding block in the
following theorem.

\newtheorem{theorem}{Theorem}
\begin{theorem}\label{HPE_EN}
(HPE Property of the Encoding Block)
The proposed encoding block for representing $f(\cdot)$ satisfies the HPE property in \eqref{HPE1}.
\end{theorem}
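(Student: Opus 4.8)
The plan is to prove the claim by tracking how an HPE permutation of the input channel matrix $\mathbf{H}$ propagates through the three components of the encoding block — the embedding layer, the $L$ hierarchical layers, and the de-embedding layer — and to show that at the output it emerges as exactly the corresponding HPE permutation of $\left(\boldsymbol{\alpha},\boldsymbol{\lambda}\right)$. The embedding layer \eqref{EM1}--\eqref{EM2} and the de-embedding layer \eqref{DE1}--\eqref{DE2} are the easy parts: each processes every column of its input matrix independently and identically through the same linear map (plus a column-wise $\text{ReLU}(\cdot)$ in \eqref{DE2}). Hence permuting the columns of their input — by any permutation, and in particular by the HPE permutation — permutes the columns of their output in exactly the same way. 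The entire difficulty therefore lies in the $L$ hierarchical layers, which I would handle by induction on $\ell$.

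For the inductive step I assume that $\mathbf{X}^{(\ell-1)}$ carries the HPE permutation, i.e., that feeding the HPE-permuted input produces $\tilde{\mathbf{X}}^{(\ell-1)}$ with the groups reordered by $\pi_0$ and, within the group now sitting in position $i$, the columns reordered by $\pi_{\pi_0(i)}$. I then analyze the two self-attention blocks \eqref{H1} and \eqref{H2} separately. Block $u_1^{(\ell)}(\cdot)$ is applied to each group $\mathbf{X}^{(\ell-1)}_m$ independently and is shared across groups. By the PE property of the self-attention block (Property~\ref{PE}), permuting the columns within a group by $\pi_m$ permutes the columns of $\mathbf{Z}^{(\ell)}_m$ by $\pi_m$; and because the same map $u_1^{(\ell)}(\cdot)$ is applied separately to each group, reordering the groups by $\pi_0$ merely reorders the output blocks accordingly. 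Thus $\mathbf{Z}^{(\ell)}$ again carries the HPE permutation. Block $u_2^{(\ell)}(\cdot)$ is then applied jointly to all $K$ columns of $\mathbf{Z}^{(\ell)}$; invoking Property~\ref{PE} once more — this time over the full index set $\{1,\dots,K\}$ — shows that $u_2^{(\ell)}(\cdot)$ is equivariant under any permutation of its $K$ input columns, and the HPE permutation is simply one such permutation. Hence $\mathbf{X}^{(\ell)}$ inherits the HPE permutation, closing the induction.

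Composing the three components, the HPE permutation imposed on $\mathbf{H}$ survives unchanged through the embedding layer, is preserved by each of the $L$ hierarchical layers, and is finally transferred column-wise to $\mathbf{X}^{\text{de}}$ and hence, via \eqref{DE2}, to $\left(\boldsymbol{\alpha},\boldsymbol{\lambda}\right)$; this is precisely \eqref{HPE1}. The step I expect to be the main obstacle is the careful bookkeeping inside the hierarchical layer, namely verifying that the two self-attention blocks interlock correctly: $u_1^{(\ell)}(\cdot)$ must deliver equivariance both within each group (from Property~\ref{PE}) and across groups (from parameter sharing together with per-group application), while $u_2^{(\ell)}(\cdot)$ supplies the coarser equivariance over all $K$ columns. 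The subtlety is that an HPE permutation is the composition of a within-group permutation and a group-level permutation, and one must confirm that $u_2^{(\ell)}(\cdot)$ — whose full PE property is a priori stronger, since it would also equate users across groups, which we explicitly do not want (cf.\ Remark~2) — in fact acts on $\mathbf{Z}^{(\ell)}$ only through an admissible HPE permutation, because that is all $u_1^{(\ell)}(\cdot)$ ever produces. I would make this rigorous by writing the HPE permutation as a single $K\times K$ permutation matrix $\mathbf{P}=\mathbf{P}_0\mathbf{P}_{\text{in}}$, with $\mathbf{P}_{\text{in}}$ block-diagonal (within-group) and $\mathbf{P}_0$ the group reordering, and checking that right-multiplication by $\mathbf{P}$ commutes through each layer.
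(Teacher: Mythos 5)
Your proposal is correct and follows essentially the same route as the paper's own proof: the embedding and de-embedding layers are column-wise and hence trivially equivariant, each hierarchical layer preserves the HPE permutation because $u_1^{(\ell)}(\cdot)$ is shared across groups and permutation equivariant within each group while $u_2^{(\ell)}(\cdot)$ is permutation equivariant over all $K$ columns (of which the HPE permutation is a special case), and composing the three pieces gives \eqref{HPE1}. The paper states the layer-by-layer argument for all $\ell\in\mathcal{L}$ rather than as an explicit induction, and does not introduce the factored permutation matrix $\mathbf{P}_0\mathbf{P}_{\text{in}}$, but these are presentational differences only.
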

\begin{proof}
See Appendix \ref{proof1}.
\end{proof}
\vspace{2mm}

Theorem~\ref{HPE_EN} implies that the proposed encoding block
is inherently incorporated with the HPE property.
This is in sharp contrast to the PE property of the vanilla transformer in Property~\ref{PE},
where the PE property is even enforced with respect to any input components no matter
whether they are from the same group.

\subsection{Decoding Block for Representing $g(\cdot,\cdot,\cdot)$}
The proposed decoding block for representing $g(\cdot,\cdot,\cdot)$
consists of a solution construction layer and $R$ constraint augmented layers.
As shown in Fig.~~\ref{decoding}, the solution construction layer
utilizes $\left(\mathbf{H}, \boldsymbol{\alpha}, \boldsymbol{\lambda}\right)$
to construct an initial multicast beamforming matrix
$\mathbf{W}^{(0)}\in\mathbb{C}^{N\times M}$;
then, $\mathbf{W}^{(0)}$ is updated through
$R$ constraint augmented layers to obtain the final multicast
beamforming solution $\mathbf{W}=\mathbf{W}^{(R)}$.

\begin{figure}[t!]
\begin{center}
  \includegraphics[width=0.5\textwidth]{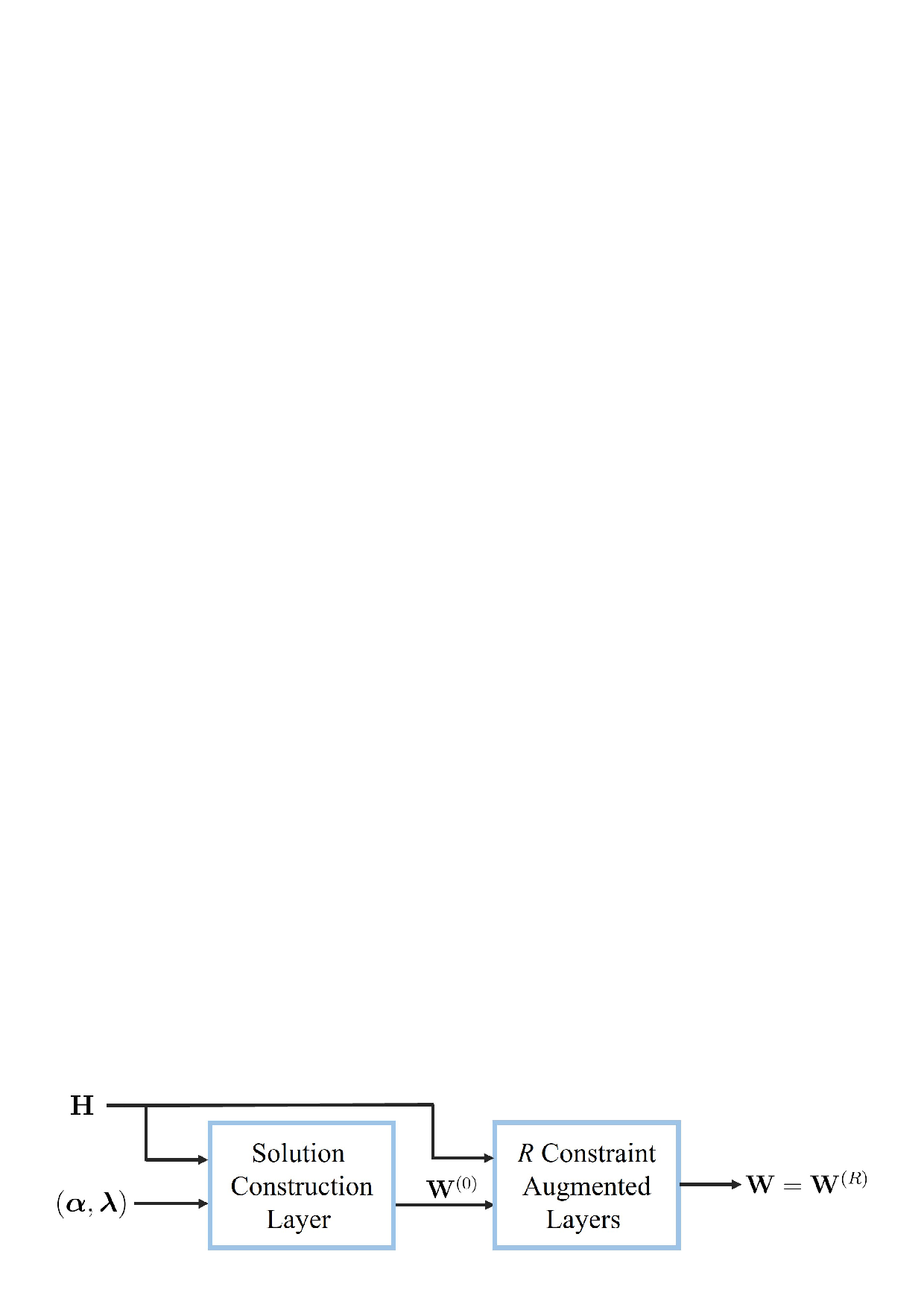}
  \caption{The architecture of the proposed decoding block for representing $g(\cdot,\cdot,\cdot)$.}\label{decoding}
\end{center}
\vspace{-0.2cm}
\end{figure}

\subsubsection{Solution Construction Layer}
The semi-closed form solution in \eqref{structure} provides an effective
guidance for establishing the mapping function from $\left(\mathbf{H}, \boldsymbol{\alpha}, \boldsymbol{\lambda}\right)$
to $\mathbf{W}$. Since the multicast beamforming structure is explicitly given by \eqref{structure},
the construction layer does not necessarily involve any trainable parameters.
In particular,
the solution construction layer computes an initial multicast beamforming $\mathbf{W}^{(0)}\triangleq\left[\mathbf{w}_{1}^{(0)},\mathbf{w}_{2}^{(0)},\ldots,\mathbf{w}_{M}^{(0)}\right]$
by a nonlinear operation:
\begin{eqnarray}\label{SC1}
\mathbf{w}_{m}^{(0)} = \left(\mathbf{I}_N+\sum_{j\in\mathcal{M}}
\sum_{k\in\mathcal{K}_j}
\lambda_{j,k}\gamma_{j}\mathbf{h}_{j,k}\mathbf{h}_{j,k}^\mathrm{H}\right)^{-1}\mathbf{H}_m\boldsymbol{\alpha}_m,
\nonumber\\
\forall~ m\in\mathcal{M}.
\end{eqnarray}
When the number of antennas $N$ is large, the computational complexity of the matrix inverse in \eqref{SC1} can be reduced by the
Woodbury matrix identity \cite{Higham}. Specifically, denoting $\mathbf{A}\triangleq\left[\mathbf{A}_1,\mathbf{A}_2,\ldots,\mathbf{A}_M\right]$,
$\mathbf{A}_m\triangleq\left[\mathbf{a}_{m,1},\mathbf{a}_{m,2}\ldots,\mathbf{a}_{m,K_m}\right]$, and $\mathbf{a}_{m,k}\triangleq \sqrt{\lambda_{m,k}\gamma_{m}}\mathbf{h}_{m,k}$,
the operation in \eqref{SC1} can be simplified as
\begin{equation}\label{SC2}
\mathbf{w}_{m}^{(0)} = \left(\mathbf{I}_N-
\mathbf{A}\left(\mathbf{I}_K+
\mathbf{A}^\mathrm{H}\mathbf{A}\right)^{-1}\mathbf{A}^\mathrm{H}\right)
\mathbf{H}_m\boldsymbol{\alpha}_m,
~~\forall~ m\in\mathcal{M}.
\end{equation}

While $\mathbf{W}^{(0)}$ may serve as a potential solution of
problem \eqref{MGMB} with
the corresponding $\mathbf{H}$, the nonconvex QoS constraints
in \eqref{MGMB_SINR} cannot be easily guaranteed without a dedicated layer design.
To better respect the constraints in \eqref{MGMB_SINR}, we further propose
$R$ constraint augmented layers to update the multicast beamforming matrix from $\mathbf{W}^{(0)}$ to $\mathbf{W}^{(R)}$.

\subsubsection{Constraint Augmented Layers}
\textcolor{black}{Moving the left-hand side of \eqref{MGMB_SINR} to the right-hand side, and noticing that
$x\leq 0$ is equivalent to $\text{ReLu}(x)=0$, we obtain
a more compact form of \eqref{MGMB_SINR}:}
\begin{eqnarray}\label{SINR2}
&&\textcolor{black}{{V_{m,k}\left(\mathbf{h}_{m,k},\mathbf{W}\right)}}\nonumber\\
&\triangleq&
\text{ReLU}\left(\gamma_m-\frac{\left\vert\mathbf{h}_{m,k}^{\mathrm{H}}\mathbf{w}_m\right\vert^2}
{\sum_{j=1,j\neq m}^{M} \left\vert\mathbf{h}_{m,k}^{\mathrm{H}}\mathbf{w}_j\right\vert^2+\sigma_{m,k}^2}\right)\nonumber\\
&=&0,~~\forall~ k\in \mathcal{K}_m,~~\forall~ m\in\mathcal{M}.
\end{eqnarray}
Then, the $K$ equality constraints in \eqref{SINR2} can be equivalently written as a single equation:
\begin{equation}\label{SINR3}
V\left(\mathbf{H},\mathbf{W}\right)\triangleq
\sum_{m\in\mathcal{M}}\sum_{k\in \mathcal{K}_m}V_{m,k}\left(\mathbf{h}_{m,k},\mathbf{W}\right)^2=0.
\end{equation}
Consequently, to guarantee the original constraints in \eqref{MGMB_SINR}, we alternatively
minimize $V\left(\mathbf{H},\mathbf{W}\right)$ over $\mathbf{W}$ to attain its minimum $0$.
To this end, we perform a number of gradient descent steps to decrease the function value of $V\left(\mathbf{H},\mathbf{W}\right)$.
In practice, if the initial multicast beamforming matrix $\mathbf{W}^{(0)}$ is close to the minimizer of
$V\left(\mathbf{H},\mathbf{W}\right)$, a few gradient descent steps are highly effective to attain the minimum of
$V\left(\mathbf{H},\mathbf{W}\right)$ \cite{Busseti19}.
These gradient descent steps are executed through $R$ constraint augmented layers.
As shown in Fig.~\ref{CA}, each constraint augmented layer performs a gradient descent update to the
multicast beamforming matrix:
\begin{eqnarray}\label{CAr}
\mathbf{W}^{(r)}  = \mathbf{W}^{(r-1)} -\eta \nabla_{\mathbf{W}}V\left(\mathbf{H},\mathbf{W}^{(r-1)}\right),\nonumber\\
\forall~ r\in\{1,2,\ldots,R\},
\end{eqnarray}
where $\eta>0$ denotes the step size,
and $\nabla_{\mathbf{W}}V\left(\mathbf{H},\mathbf{W}^{(r-1)}\right)$ denotes the gradient of
$V\left(\mathbf{H},\mathbf{W}\right)$ at $\mathbf{W}^{(r-1)}$.
\textcolor{black}{The function $V(\cdot,\cdot)$ given in \eqref{SINR3}
includes the SINR targets $\{\gamma_m\}_{m\in\mathcal{M}}$.
Consequently, once the SINR targets vary, the constraint augmented layers are adaptable
to the new SINR targets by adopting the new $\{\gamma_m\}_{m\in\mathcal{M}}$ in $V(\cdot,\cdot)$.}

\begin{figure}[t!]
\begin{center}
  \includegraphics[width=0.5\textwidth]{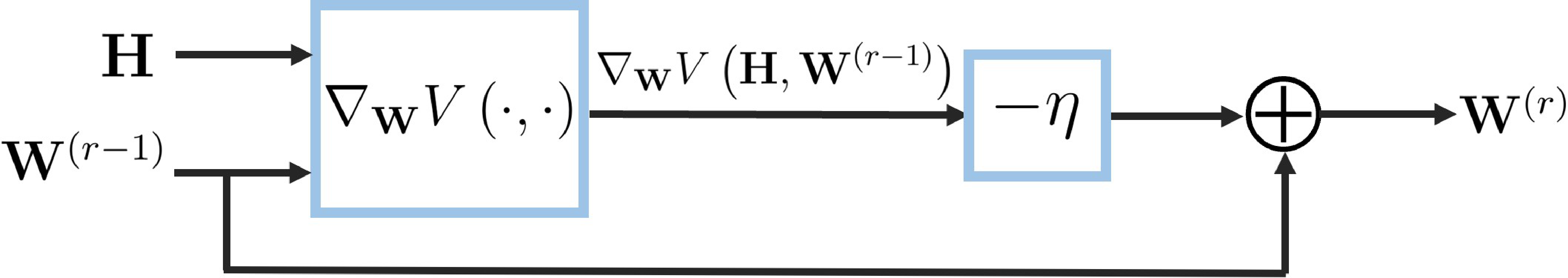}
  \caption{The $r$-th constraint augmented layer of the proposed decoding block.}\label{CA}
\end{center}
\vspace{-0.2cm}
\end{figure}

Notice that in \eqref{CAr}, the step size $\eta$ is the same in all of the $R$ constraint augmented layers. In this way,
we can set $R=R^{\text{train}}$ to be relatively small to facilitate
the back-propagation during the training procedure, while using a larger $R=R^{\text{test}}\geq R^{\text{train}}$
to ensure the feasibility of \eqref{MGMB_SINR}
during the test procedure.

Based on the above solution construction layer and the constraint augmented layers,
we establish the HPE property of the proposed decoding block in the following theorem.

\begin{theorem}\label{HPE_DE}
(HPE Property of the Decoding Block)
The proposed decoding block for representing $g(\cdot,\cdot,\cdot)$ satisfies the HPE property in \eqref{HPE2}.
\end{theorem}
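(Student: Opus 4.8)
The plan is to show that the two components of the decoding block---the solution construction layer and the $R$ constraint augmented layers---each respect the structure of \eqref{HPE2}, and then to conclude by composition. Throughout, I treat the within-group user permutations $\{\pi_m\}$ and the group permutation $\pi_0$ as acting simultaneously, and I let the group-indexed SINR targets $\{\gamma_m\}$ travel with their groups (i.e.\ $\tilde{\gamma}_m=\gamma_{\pi_0(m)}$), which is the natural convention since $\gamma_m$ is the target attached to whatever is labeled group $m$. Denote by $\mathbf{P}$ the $M\times M$ permutation matrix induced by $\pi_0$, so that relabeling the groups acts on the beamforming matrix by the column permutation $\mathbf{W}\mapsto\mathbf{W}\mathbf{P}$; note that the within-group permutations $\{\pi_m\}$ do not act on $\mathbf{W}$ at all, since $\mathbf{W}$ carries one column per group. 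The goal is therefore to prove $\tilde{\mathbf{W}}^{(R)}=\mathbf{W}^{(R)}\mathbf{P}$, which is exactly \eqref{HPE2}.

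First I would treat the solution construction layer \eqref{SC1}. The key observation is that the matrix
\begin{equation}
\mathbf{Q}\triangleq\mathbf{I}_N+\sum_{j\in\mathcal{M}}\sum_{k\in\mathcal{K}_j}\lambda_{j,k}\gamma_{j}\mathbf{h}_{j,k}\mathbf{h}_{j,k}^\mathrm{H}
\end{equation}
is a sum over all user--group pairs, so reordering the users within any group or reordering the groups (together with $\{\gamma_m\}$ and $\{\lambda_{m,k}\}$) merely relabels the summands and leaves $\mathbf{Q}$ unchanged: $\tilde{\mathbf{Q}}=\mathbf{Q}$. Likewise $\mathbf{H}_m\boldsymbol{\alpha}_m=\sum_{k\in\mathcal{K}_m}\alpha_{m,k}\mathbf{h}_{m,k}$ is invariant under the within-group permutation $\pi_m$, while under $\pi_0$ the $m$-th block becomes $\tilde{\mathbf{H}}_m\tilde{\boldsymbol{\alpha}}_m=\mathbf{H}_{\pi_0(m)}\boldsymbol{\alpha}_{\pi_0(m)}$. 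Combining these gives $\tilde{\mathbf{w}}^{(0)}_m=\mathbf{Q}^{-1}\mathbf{H}_{\pi_0(m)}\boldsymbol{\alpha}_{\pi_0(m)}=\mathbf{w}^{(0)}_{\pi_0(m)}$, i.e.\ $\tilde{\mathbf{W}}^{(0)}=\mathbf{W}^{(0)}\mathbf{P}$, which is precisely the required invariance with respect to the users and equivariance with respect to the groups.

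Next I would handle the constraint augmented layers \eqref{CAr} by induction on $r$, the crux being a gradient-equivariance property of $V$. I would first argue that $V$ of \eqref{SINR3} is invariant under the joint permutation, $V(\tilde{\mathbf{H}},\mathbf{W}\mathbf{P})=V(\mathbf{H},\mathbf{W})$ for every $\mathbf{W}$: since $V$ sums the squared violations $V_{m,k}^2$ over all user--group pairs, and each per-user SINR in \eqref{SINR} depends only on the corresponding channel, its own group's beamformer, and the set-sum of interference over the other groups, the permutation sends the term of original user $(m,k)$ bijectively to a permuted term with the matching $\tilde{\gamma}$, so the total sum is preserved. From this invariance, writing $V(\tilde{\mathbf{H}},\cdot)$ as $V(\mathbf{H},\cdot\,\mathbf{P}^{-1})$ and differentiating through the column-permutation action by the chain rule yields $\nabla_{\mathbf{W}}V(\tilde{\mathbf{H}},\mathbf{W}\mathbf{P})=\left[\nabla_{\mathbf{W}}V(\mathbf{H},\mathbf{W})\right]\mathbf{P}$. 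Assuming inductively that $\tilde{\mathbf{W}}^{(r-1)}=\mathbf{W}^{(r-1)}\mathbf{P}$, the update \eqref{CAr} then gives
\begin{equation}
\tilde{\mathbf{W}}^{(r)}=\mathbf{W}^{(r-1)}\mathbf{P}-\eta\left[\nabla_{\mathbf{W}}V(\mathbf{H},\mathbf{W}^{(r-1)})\right]\mathbf{P}=\mathbf{W}^{(r)}\mathbf{P},
\end{equation}
while the base case $r=0$ is exactly the conclusion of the construction layer. Propagating to $r=R$ gives $\tilde{\mathbf{W}}=\tilde{\mathbf{W}}^{(R)}=\mathbf{W}^{(R)}\mathbf{P}=[\mathbf{w}_{\pi_0(1)},\ldots,\mathbf{w}_{\pi_0(M)}]$, establishing \eqref{HPE2}.

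I expect the main obstacle to be the gradient-equivariance step rather than the invariance of $V$ (which is essentially a relabeling of a finite sum). Turning the scalar invariance of $V$ into the correct column-permuted transformation of the matrix gradient $\nabla_{\mathbf{W}}V$ requires care with the matrix chain rule and with the fact that $\mathbf{W}$ is complex, so the gradient should be read in the Wirtinger sense, or equivalently by stacking real and imaginary parts as in \eqref{EM1}; one must also keep track that only $\pi_0$ acts on $\mathbf{W}$ while the $\{\pi_m\}$ enter solely through the invariance of $V$. The bookkeeping of letting $\{\gamma_m\}$ permute with the groups is what makes the interference term and the target match up term-by-term, and is the one modeling convention the argument genuinely relies on.
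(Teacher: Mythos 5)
Your proposal is correct and follows essentially the same route as the paper's proof in Appendix C: show the solution construction layer maps permuted inputs to the column-permuted $\mathbf{W}^{(0)}$ because the inverse matrix term is a symmetric sum over all user--group pairs, then propagate through the constraint augmented layers using the permutation invariance of $V$ and the corresponding equivariance of its gradient (your $\nabla_{\mathbf{W}}V(\tilde{\mathbf{H}},\mathbf{W}\mathbf{P})=[\nabla_{\mathbf{W}}V(\mathbf{H},\mathbf{W})]\mathbf{P}$ is exactly the paper's column-wise relation $\nabla_{\mathbf{w}_m}V(\mathbf{H},\mathbf{W}^{(r-1)})=\nabla_{\mathbf{w}_{\pi_0(m)}}V(\tilde{\mathbf{H}},\tilde{\mathbf{W}}^{(r-1)})$, which you derive by the chain rule where the paper asserts it directly). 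Your explicit induction, the permutation-matrix bookkeeping, and the remarks on the Wirtinger gradient and on $\gamma_m$ traveling with its group are just more careful renderings of the same argument.
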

\begin{proof}
See Appendix \ref{proof2}.
\end{proof}
\vspace{2mm}

\emph{Remark 4:} For the constraint augmented layers, a naive version
is to set $R^{\text{train}}=0$, while tuning $R^{\text{test}}$ to meet the QoS constraints.
In this case, since the HPE transformer is trained through $\mathbf{W}^{(0)}$,
the constraint augmented layers only play a role as a postprocessing step of the HPE transformer.
In contrast, when the HPE transformer is trained through $\mathbf{W}^{(R^{\text{train}})}$ with
$R^{\text{train}}>0$, the constraint augmented layers become a part of the HPE transformer, making it more powerful to respect the QoS constraints. In Section V, we will show the performance
gain of the proposed HPE transformer compared with the naive version of $R^{\text{train}}=0$.

\textcolor{black}{
\emph{Remark 5:}
The gradient descent steps are also adopted
to decrease the constraint violations in \cite{Donti21}.
In this sense, the proposed HPE transformer
can be viewed as an extension of \cite{Donti21} from generic fully connected neural networks
to the problem-dependent architecture with HPE property,
which empowers the constraint augmented layers
to well generalize on different problem scales.
In Section V, we will demonstrate through simulations that the constraint violations can be effectively mitigated
even if the problem scales are different from those of the training settings.}

\subsection{Training Procedure}
So far, we have presented the architecture of the proposed
HPE transformer for representing the
mapping function from $\mathbf{H}$ to $\mathbf{W}$, i.e., $\mathbf{W}=g\left(\mathbf{H},f\left(\mathbf{H}\right)\right)$.
With $\boldsymbol{\Theta}$ denoting all the trainable parameters in $g(\cdot,f(\cdot))$,
we add a penalty term into the loss function
to better balance the trade-off between the total transmit power and the violation of the QoS constraints.
In particular, the loss function during the training procedure is given by
\begin{equation}\label{loss}
\min_{\boldsymbol{\Theta}: \mathbf{W}=g\left(\mathbf{H},f\left(\mathbf{H}\right)\right)}~
\mathbb{E}\left[
\sum_{m=1}^M\left\Vert\mathbf{w}_{m}\right\Vert_2^2+\rho V\left(\mathbf{H}, \mathbf{W}\right)\right],
\end{equation}
where $\rho>0$ is a hyper-parameter for penalizing the constraint violation $V\left(\mathbf{H}, \mathbf{W}\right)$
defined in \eqref{SINR3},
and the expectation is taken over the channel matrix ${\mathbf{H}}$.

\begin{algorithm}[t!]
\caption{The Training Procedure for Optimizing $\boldsymbol{\Theta}$}
\begin{algorithmic}[1]\footnotesize
\State \textbf{Input:} number of epochs $N^{\text{e}}$,
steps per epoch $N^{\text{s}}$,
batch size $N^{\text{b}}$,
decay factor $\beta$,
and penalty hyper-parameter $\rho$;\\
\textbf{Initialize:} learning rate $\tau$;\\
$\textbf{for}$ $\text{epoch} = 1, 2,\ldots, N^{\text{e}}$\\
~~~~\textbf{for} $\text{step} = 1, 2,\ldots, N^{\text{s}}$\\
\begin{enumerate}[]
\item
~~~~a) Generate a batch of $N^{\text{b}}$ training samples $\left\{\mathbf{H}(n)\right\}_{n=1}^{N^{\text{b}}}$;
\item
~~~~b) Compute the mini-batch gradient of the loss function \eqref{loss} over $\boldsymbol{\Theta}$;
\item
~~~~c) Update $\boldsymbol{\Theta}$ by a stochastic gradient descent step using the Adam optimizer with learning rate $\tau$;
\end{enumerate}\\
~~~~\textbf{end}\\
~~~~$\tau\leftarrow\beta\tau$;\\
$\textbf{end}$\\
\textbf{Output:} HPE transformer $g(\cdot,f(\cdot))$ with optimized $\boldsymbol{\Theta}^{\star}.$
\end{algorithmic}
\end{algorithm}

Correspondingly, the training procedure for optimizing  $\boldsymbol{\Theta}$
is given in Algorithm~1. As shown in line 7, we adopt a learning rate decay strategy to
accelerate the training procedure \cite{You2019}. In particular, the learning rate $\tau$ is decreased by a factor $\beta$ at the end of each training epoch.
After the training procedure, the proposed
HPE Transformer $g(\cdot,f(\cdot))$ with the optimized trainable parameters $\boldsymbol{\Theta}^{\star}$
is used for multicast beamforming design given any channel matrix ${\mathbf{H}}$.

\section{Simulation Results}
In this section, we provide simulation results to demonstrate
the benefits of the proposed HPE transformer for solving the QoS constrained multi-group multicast beamforming design problem.


\subsection{Simulation Setting, Choices of Hyper-Parameters, and Benchmarks}
Consider a downlink multiuser system, where the $(x, y, z)$-coordinate of the BS in meters is $(0, 0, 20)$.
The users are uniformly distributed in a rectangular area $[85, 95]\times[85, 115]$ in the $(x,y)$-plane with $z=0$.
The large-scale fading coefficient is generated
according to the path-loss model $32.6 + 36.7 \log_{10} D$ in dB, where $D$ is the distance in meters between the BS and the user,
and the small-scale Rayleigh fading coefficient follows $\mathcal{CN}\left(0, 1\right)$.
The background Gaussian noise power is $-100$ dBm,
and the SINR target of each user is $10$ dB.

The hyper-parameters of the proposed HPE transformer are summarized as follows.
For the encoding block,
the embedding dimension is set as $d=128$,
the number of hierarchical layers is set as $L=2$,
the number of attention heads in the $\text{MHA}$ operation is $T=4$, and
the $\text{CFF}$ operation adopts a two-layer fully connected neural network with
$512$ hidden neurons using the ReLU activation.
For the decoding block, the number of constraint augmented layers is set as $R^{\text{train}}=5$,
and the step size of each constraint augmented layer
is set as $\eta=0.01$.

During the training procedure,
we apply the Adam optimizer in Pytorch
to optimize the trainable parameters of the proposed HPE transformer.
In Algorithm~1, the number of training epochs is set as $N^{\text{e}}=100$,
the number of gradient descent steps in each training epoch is
$N^{\text{s}}=2000$, and each step is updated using a batch of $N^{\text{b}}=1024$ training samples.
Unless otherwise specified, the learning rate is initialized as $\tau=10^{-4}$ and
the decay factor is set as $\beta=0.98$.

After the training procedure, the test performance
is evaluated over $N^{\text{t}}=1280$ test samples
in terms of both the total transmit power and the average constraint violation.
In particular, the average constraint violation is measured by
\begin{equation}\label{CV}
\text{CV}\triangleq
\frac{1}{K}
\sum_{m\in\mathcal{M}}\sum_{k\in \mathcal{K}_m}\frac{V_{m,k}\left(\mathbf{h}_{m,k},\mathbf{W}\right)}{\gamma_m},
\end{equation}
which reflects the relative violation to the SINR target.
\textcolor{black}{The $\text{CV}$ is a soft metric, which also considers the quality of those infeasible samples by measuring how much violation they cause. This metric is also consistent with the penalty term in the
loss function \eqref{loss}, which minimizes the violation of all the samples isotropically.}
All the experiments are implemented on the Tesla T4 GPU
and the Intel(R) Xeon(R) CPU @ 2.20GHz.

For comparison, we provide the simulation results of the following benchmarks.

\begin{itemize}
     \item \textbf{FC}: This is a 4-layer fully connected neural network, where
     the input is a $2KN$-dimensional real vector
     $\mathbf{h}^{\text{in}}=\text{vec}\left(\left[{\Re\left(\mathbf{H}\right)}^\mathrm{T},{\Im\left(\mathbf{H}\right)}^\mathrm{T}\right]^\mathrm{T}\right)$,
     and the output is a $2MN$-dimensional real vector
     $\mathbf{w}^{\text{out}}=\text{vec}\left(\left[{\Re\left(\mathbf{W}\right)}^\mathrm{T},{\Im\left(\mathbf{W}\right)}^\mathrm{T}\right]^\mathrm{T}\right)$.
     The number of neurons in each hidden layer is $\{4KN, 512, 4MN\}$, and
the activation function in each hidden layer is \text{ReLU}.

     \item \textbf{MMGNN}: A GNN based architecture was recently proposed for the max-min fairness
     multi-group multicast beamforming design \cite{Zhang22}.
     Since there is no peak power constraint in problem \eqref{MGMB}, the normalization of the output layer in MMGNN is omitted.
     Other than that, all the hyper-parameters of the neural network architecture are set according to Table~I in~\cite{Zhang22}.

     \item \textbf{Vanilla Transformer}: We adopt a vanilla transformer with
     $1$ embedding layer, $4$ self-attention blocks (see Fig.~\ref{AB}), and $1$ output layer.
     In particular, the output layer aggregates the hidden representations
     of the users in the same multicast group, and then utilizes a group-wise linear transformation to produce the corresponding beamformer. Other hyper-parameters are set the same as those of the proposed HPE transformer. \textcolor{black}{Notice that GNNs enjoy the PE property \cite{Peng23,Ziwei23}, while
     the vanilla transformer can be viewed as a special case of GNNs
     for fully connected graphs, and hence also enjoys the PE property.}

     \item \textbf{CCP}: This is an optimization-based iterative algorithm \cite{Yuille2003}, which is guaranteed to converge to a stationary point of problem \eqref{MGMB}. The initial point is obtained based on zero-forcing initialization \cite{Chen2017}. To ensure the convergence,
         the maximum iteration number is set as $10$ in the simulations.

          \item \textcolor{black}{\textbf{SCA-R}: This is a successive convex approximation algorithm
          to deal with the nonconvex rate constraints \cite{Li2015}, which is guaranteed to converge to a stationary point. To ensure the convergence,
         the maximum iteration number is set as $100$ in the simulations.}

    \item \textbf{ACR-BB}: This is a branch-and-bound based algorithm
     for single-group multicast beamforming design \cite{Lu2017}.
     Since ACR-BB is guaranteed to converge to the global optimal solution,
     we adopt it as a benchmark for the case of $M=1$ multicast group.
     The relative error tolerance
     is set as $10^{-4}$ in ACR-BB in order to obtain an almost global optimal solution.
\end{itemize}

\subsection{Performance Evaluation for Single Multicast Group}
\begin{figure}[t!]
\begin{center}
  \subfigure[Test performance on $K=4$.]{
  \includegraphics[width=0.35\textwidth]{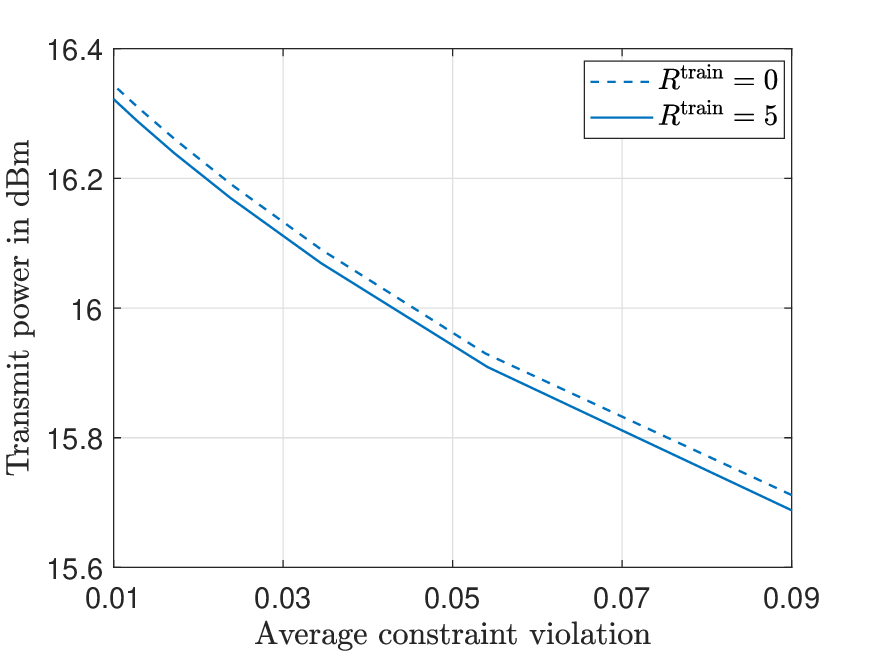}\label{power_CV_K4}
  }
 \subfigure[Test performance on $K=16$.]{
  \includegraphics[width=0.35\textwidth]{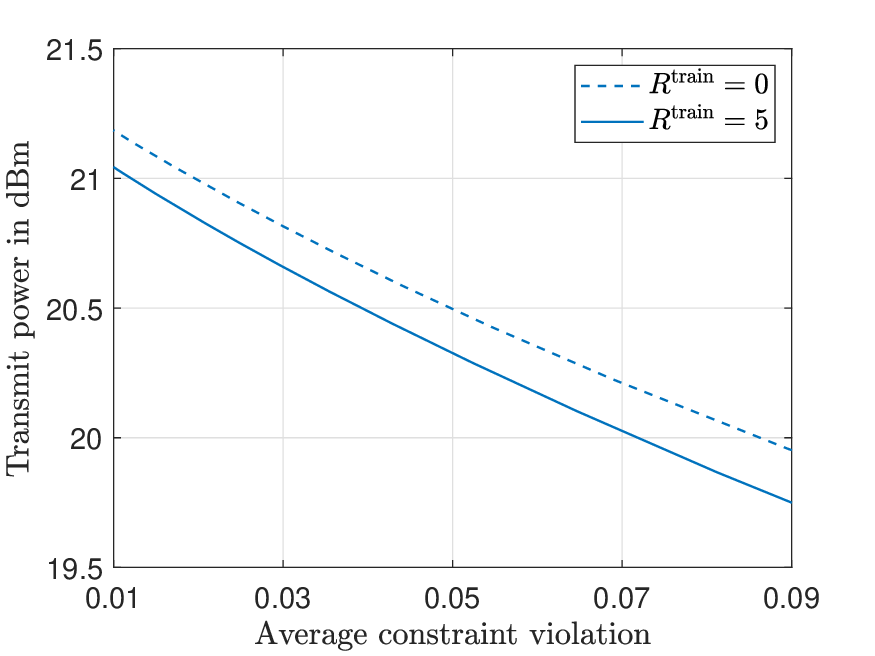}\label{power_CV_K16}
  }
\caption{Trade-off between the total transmit power and the average constraint violation.}\label{power_CV}
\end{center}
\vspace{-0.2cm}
\end{figure}

First, we demonstrate the performance of the proposed HPE transformer for
multicast beamforming design in a special case of $M=1$ multicast group. The number of antennas at the BS is set as $N=8$, and the number of users during the training procedure is fixed at $K=4$.
The penalty hyper-parameter of the HPE transformer is set as
$\rho=0.5$. We first
verify the effectiveness of the constraint augmented layers by comparing the
HPE transformer under $R^{\text{train}}=5$ with a naive version of $R^{\text{train}}=0$.
After training, we vary $R^{\text{test}}$ to realize a trade-off between the total transmit power and the average constraint violation as shown in Fig.~\ref{power_CV}.
The number of users during the test procedure is set as $K=4$ and $K=16$ in Fig.~\ref{power_CV_K4} and Fig.~\ref{power_CV_K16}, respectively.
It can been seen that $R^{\text{train}}=5$ achieves a
better trade-off between the transmit power and the constraint violation,
and the performance gain becomes larger when $K$ increases from $4$ to $16$.
In the rest of simulations, we fix $R^{\text{train}}=5$ and set
$R^{\text{test}}$ such that the average constraint violation is smaller than $0.01$.
\textcolor{black}{For fair comparison, the total transmit power is calculated based on the samples with $\text{CV}\leq0.05$ in the rest of simulations.
If an approach cannot even return any solution with $\text{CV}\leq0.05$, the corresponding total transmit power will be omitted.}

\begin{figure*}[t!]
\begin{center}
  \subfigure[Total transmit power versus $K$.]{
  \includegraphics[width=0.31\textwidth]{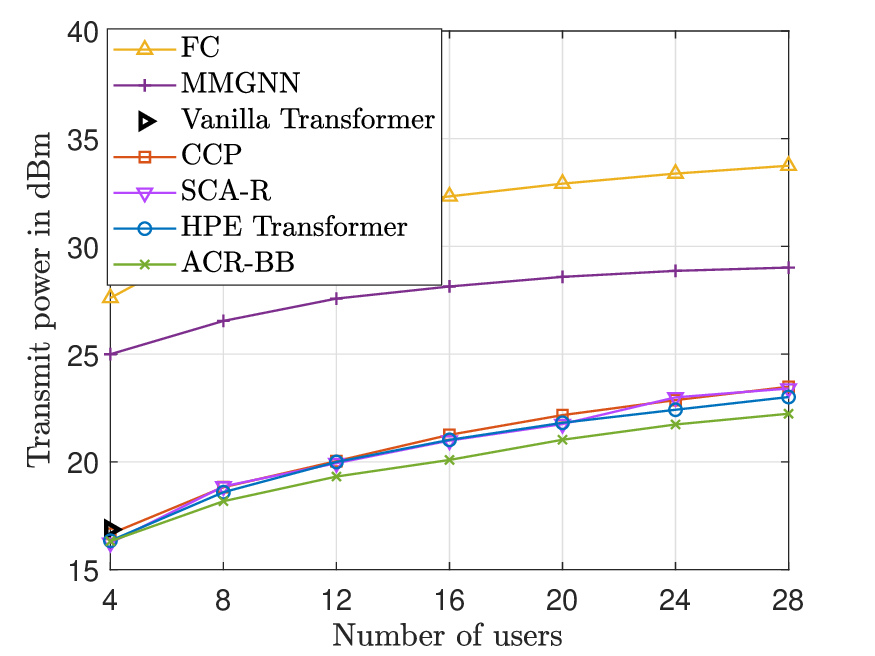}\label{power_M1}
  }
 \subfigure[Constraint violation versus $K$.]{
  \includegraphics[width=0.31\textwidth]{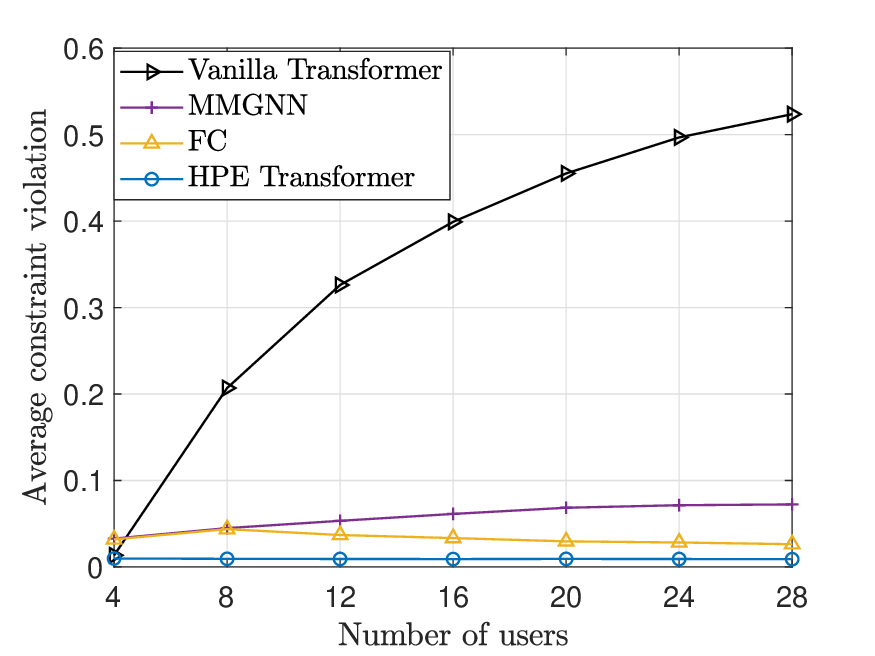}\label{CV_M1}
  }
  \subfigure[Computational time versus $K$.]{
  \includegraphics[width=0.31\textwidth]{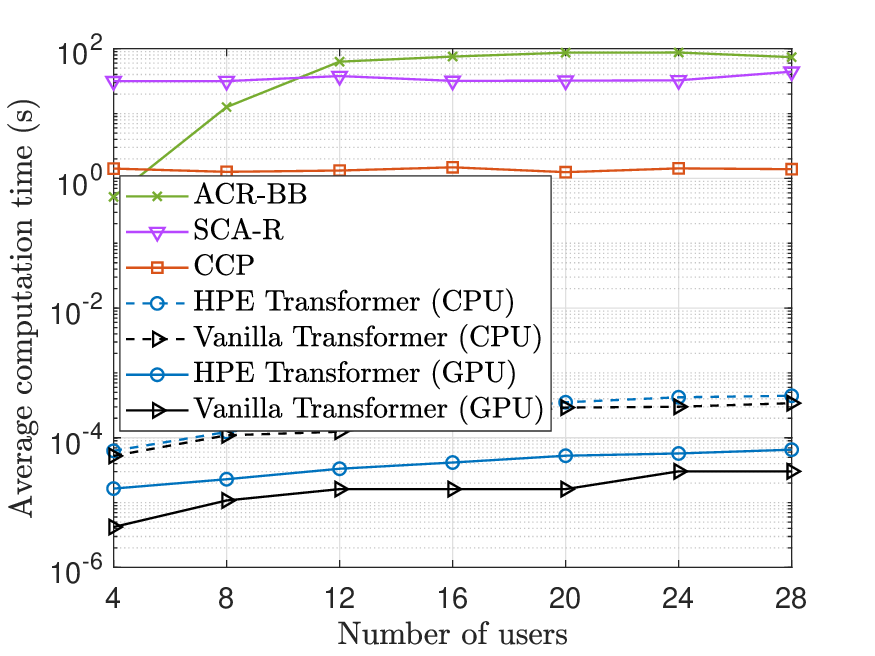}\label{T_M1}
}\caption{The generalization performance of the HPE transformer under different numbers of users for $M=1$ multicast group.}\label{M1}
\end{center}
\vspace{-0.2cm}
\end{figure*}

In Fig.~\ref{M1}, we illustrate the generalization performance of the proposed HPE transformer under different numbers of users from $4$ to $28$
and compare it with other benchmarks.
The penalty hyper-parameter of the vanilla transformer, MMGNN,
and FC is tuned as $\rho=5$, $20$, and $50$, respectively.
In Fig.~\ref{M1}, the proposed
HPE transformer, the vanilla transformer, and MMGNN are trained under $K=4$ while being tested under different $K$ from $4$ to $28$. Since the architecture of FC depends on the number of users,
it is retrained under different $K$ in Fig.~\ref{M1}.

From Fig.~\ref{power_M1}, we can see that as $K$ increases, the proposed
HPE transformer always achieves much lower transmit power than FC and MMGNN,
and the transmit power of the proposed HPE transformer
is very close to that of the CCP, \textcolor{black}{SCA-R,}
and ACR-BB.
\textcolor{black}{Note that ACR-BB has the lowest transmit power because it is a branch-and-bound based algorithm, which is guaranteed to find the global optimal solution.}
Moreover, Fig.~\ref{CV_M1} demonstrates that the superiority of the proposed
HPE transformer in the transmit power is achieved without sacrificing the QoS constraints.
In contrast, the other three deep learning-based methods
cannot well satisfy the QoS constraints.
To demonstrate the superiority of the proposed HPE transformer in real-time implementations,
we further show its computational time on GPU/CPU in Fig.~\ref{T_M1},
where the computational time of \textcolor{black}{FC and MMGNN}
is omitted due to their much higher transmit power and/or much larger constraint violation.
It can be seen that the computational time of the proposed HPE transformer is much less than that of the optimization-based CCP, \textcolor{black}{SCA-R,} and ACR-BB.
\textcolor{black}{Since SCA-R achieves a similar total transmit power
as that of CCP, but requires a much longer computational time,
we omit its results in the rest of simulations.}

\subsection{Performance Evaluation for Multiple Multicast Groups}
Next, we compare the performance of different approaches for $M=3$
multicast groups.
The number of antennas at the BS is set as $N=16$,
and the number of users in each group $m$
during the training procedure is fixed at $K_m=4$,
i.e., the total number of users is $K=12$.
In Fig.~\ref{M3}, we show the generalization performance of the proposed HPE transformer
as $K$ increases from $3$ to $15$, where
each multicast group contains the same number of users.
The learning rate is initialized as $\tau=10^{-3}$ and
the decay factor is set as $\beta=0.96$.
The penalty hyper-parameter of the proposed HPE transformer, the vanilla transformer, and MMGNN is
tuned as $\rho=0.2$, $10$, and $10$, respectively.
Since FC performs the worst and requires to be retrained under different $K$,
we omit it in the rest of simulations.

\begin{figure*}[t!]
\begin{center}
  \subfigure[Total transmit power versus $K$.]{
  \includegraphics[width=0.31\textwidth]{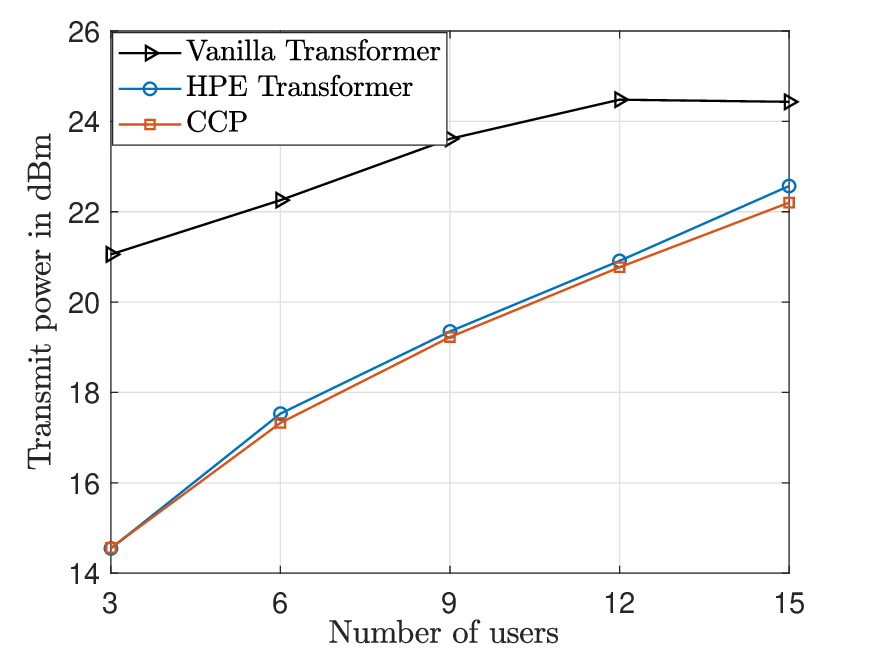}\label{power_M3}
  }
 \subfigure[Constraint violation versus $K$.]{
  \includegraphics[width=0.31\textwidth]{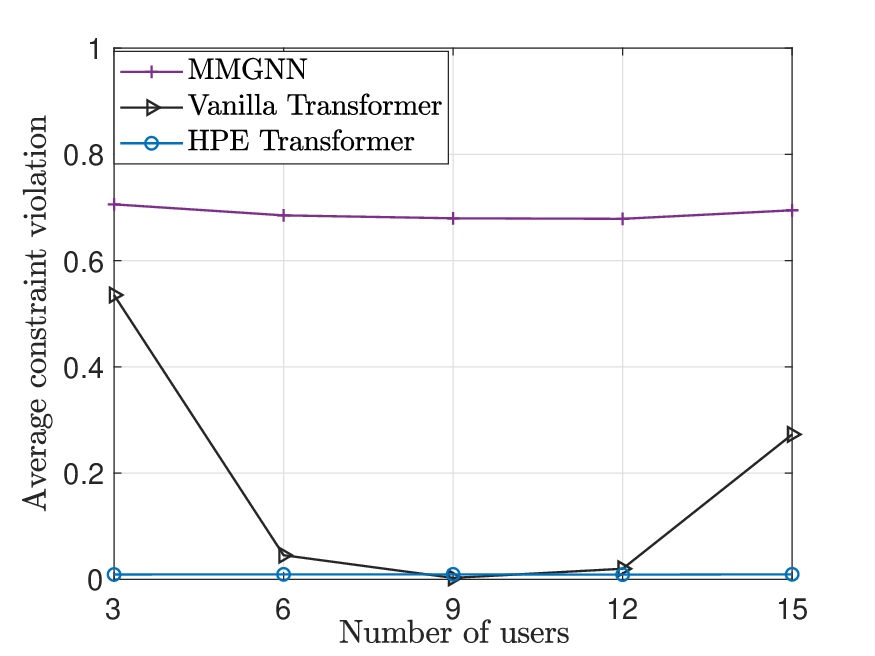}\label{CV_M3}
  }
  \subfigure[Computational time versus $K$.]{
  \includegraphics[width=0.31\textwidth]{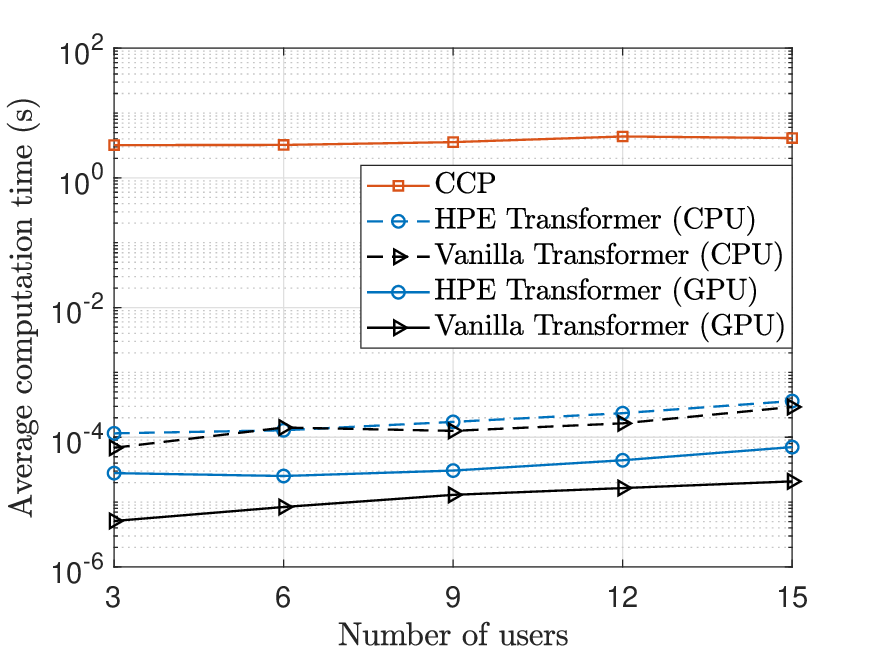}\label{T_M3}
}\caption{The generalization performance of the HPE transformer
under different numbers of users for $M=3$ multicast groups.}\label{M3}
\end{center}
\vspace{-0.2cm}
\end{figure*}

From Fig.~\ref{power_M3}, we can see that the transmit power of the proposed
HPE transformer generalized on different $K$ is nearly the same as
that of the optimization-based CCP.
In contrast, from Fig.~\ref{power_M3} and Fig.~\ref{CV_M3},
we can see that the vanilla transformer performs pretty badly especially when $K$ is different from that in the training procedure. Besides, MMGNN performs
\textcolor{black}{the worst}
even when $K$ is the same as that in the training procedure.
Moreover, Fig.~\ref{T_M3} shows that the computational speed of the proposed
HPE transformer is $10^4\sim10^5$ times faster than the optimization-based CCP
under different $K$.

\begin{figure*}[t!]
\begin{center}
  \subfigure[Total transmit power versus $M$.]{
  \includegraphics[width=0.31\textwidth]{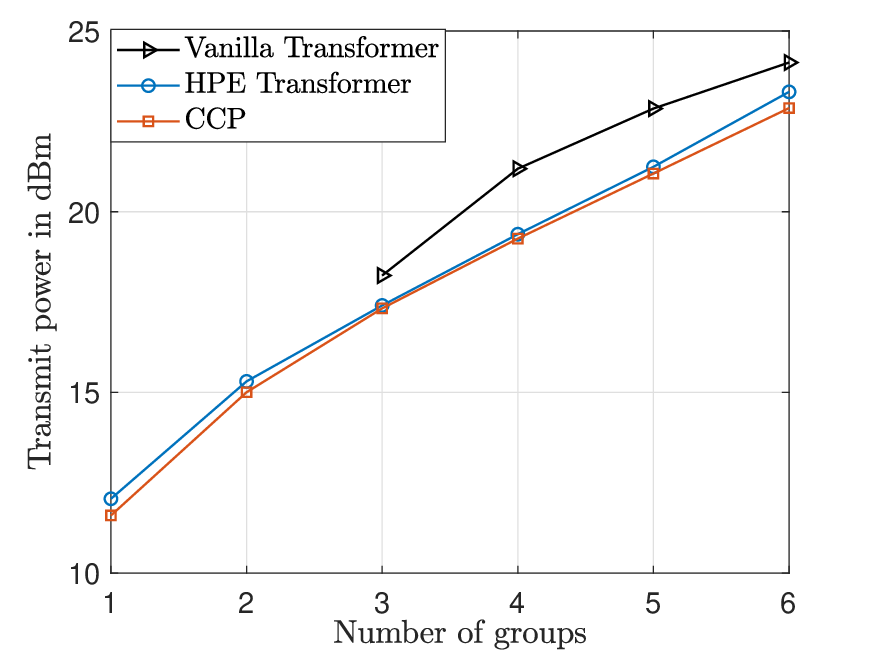}\label{power_M4}
  }
 \subfigure[Constraint violation versus $M$.]{
  \includegraphics[width=0.31\textwidth]{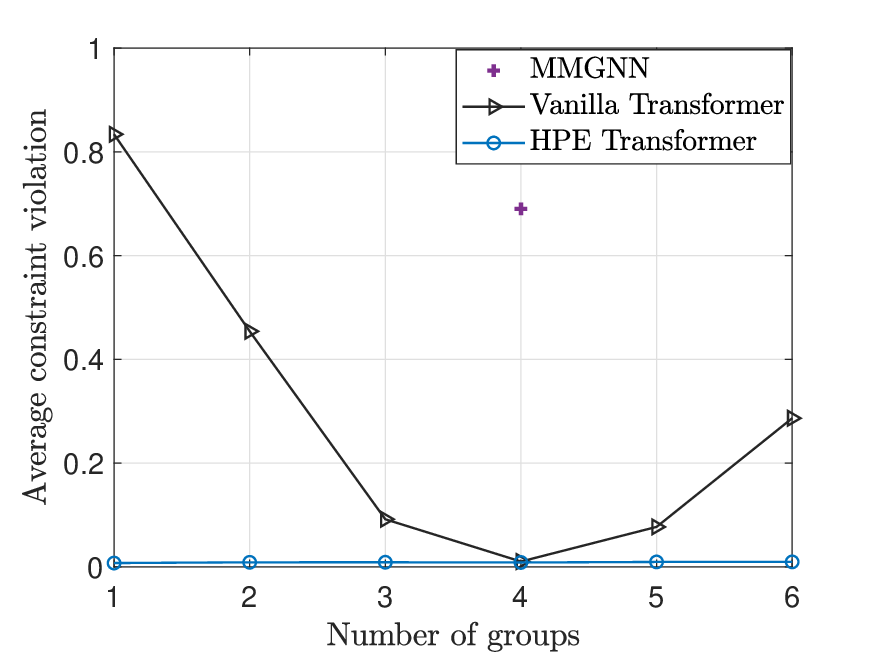}\label{CV_M4}
  }
  \subfigure[Computational time versus $M$.]{
  \includegraphics[width=0.31\textwidth]{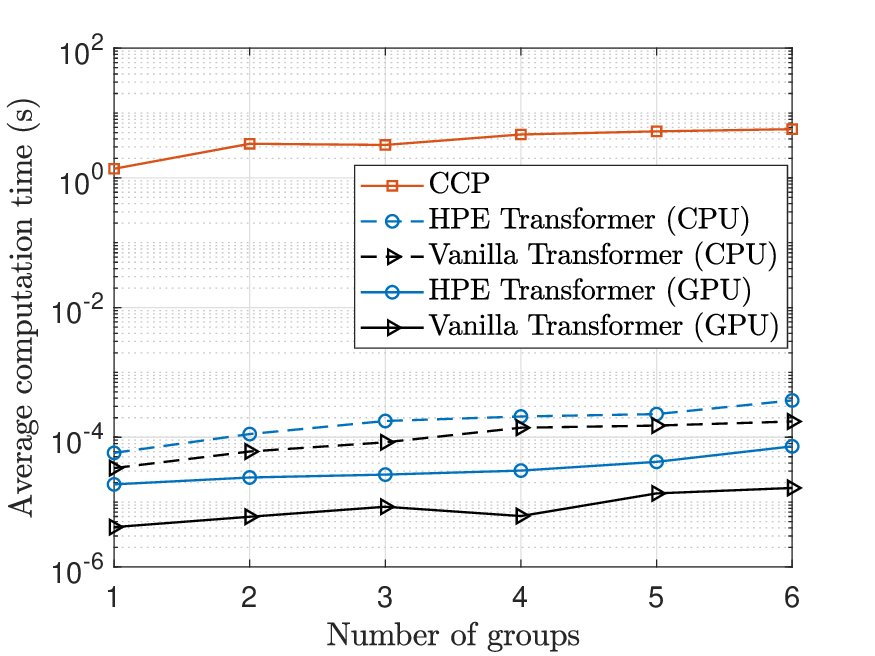}\label{T_M4}
}\caption{The generalization performance of the HPE transformer under different multicast groups.}\label{M4}
\end{center}
\vspace{-0.2cm}
\end{figure*}

In addition to the generalization performance on different numbers of users,
we further demonstrate the generalization performance of the proposed HPE transformer under different
numbers of multicast groups.
We fix the number of multicast groups during the training procedure
at $M=4$, while testing the generalization performance under different $M$
from $1$ to $6$ in Fig.~\ref{M4}.
The number of antennas at the BS is set as $N=16$,
and the number of users in each multicast group is $K_m=2$.
The penalty hyper-parameter of the proposed HPE transformer, the vanilla transformer, and MMGNN is tuned as
$\rho=0.2$, $10$, and $10$, respectively.
Since a well-trained MMGNN cannot be used for a different
number of multicast groups, we only show the simulation results of MMGNN when $M=4$.

It can be seen from Fig.~\ref{power_M4} that the transmit power of
the proposed HPE transformer generalized on different $M$ is very close to
that of the optimization-based CCP.
In contrast, as shown in
Fig.~\ref{power_M4} and Fig.~\ref{CV_M4}, the performance of the vanilla transformer is much worse
(compared with that of the proposed HPE transformer)
especially when $M$ is different from that in the training procedure.
Although MMGNN is tested under the same
number of multicast groups as that in the training procedure,
its constraint violation \textcolor{black}{is} the largest.
Moreover, Fig.~\ref{T_M4} shows that the proposed HPE transformer achieves remarkably faster computational speed than the optimization-based CCP under different $M$.

\begin{figure*}[t!]
\begin{center}
  \subfigure[Total transmit power versus SINR target.]{
  \includegraphics[width=0.31\textwidth]{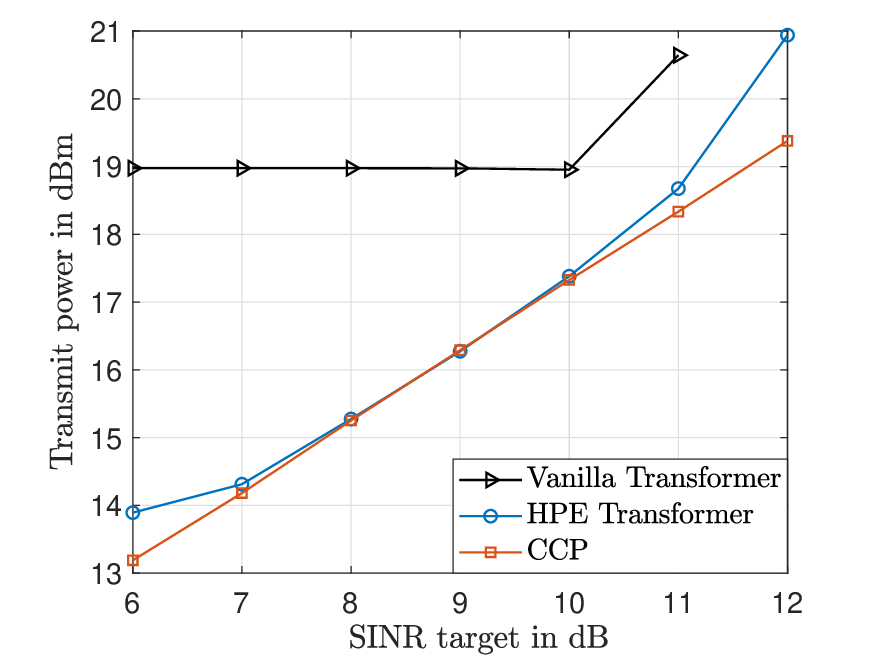}\label{power_gamma}
  }
 \subfigure[Constraint violation versus SINR target.]{
  \includegraphics[width=0.31\textwidth]{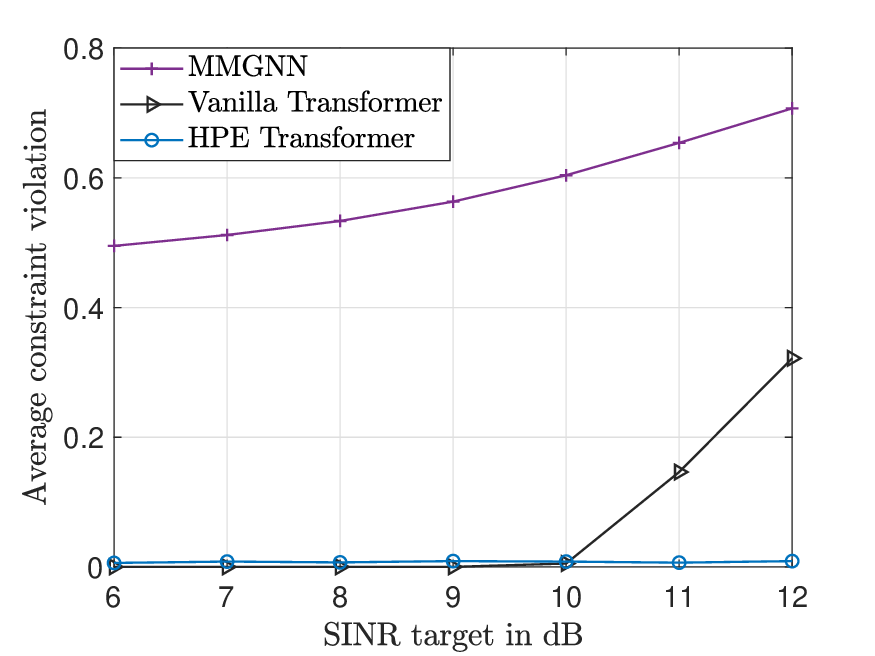}\label{CV_gamma}
  }
  \subfigure[Computational time versus SINR target.]{
  \includegraphics[width=0.31\textwidth]{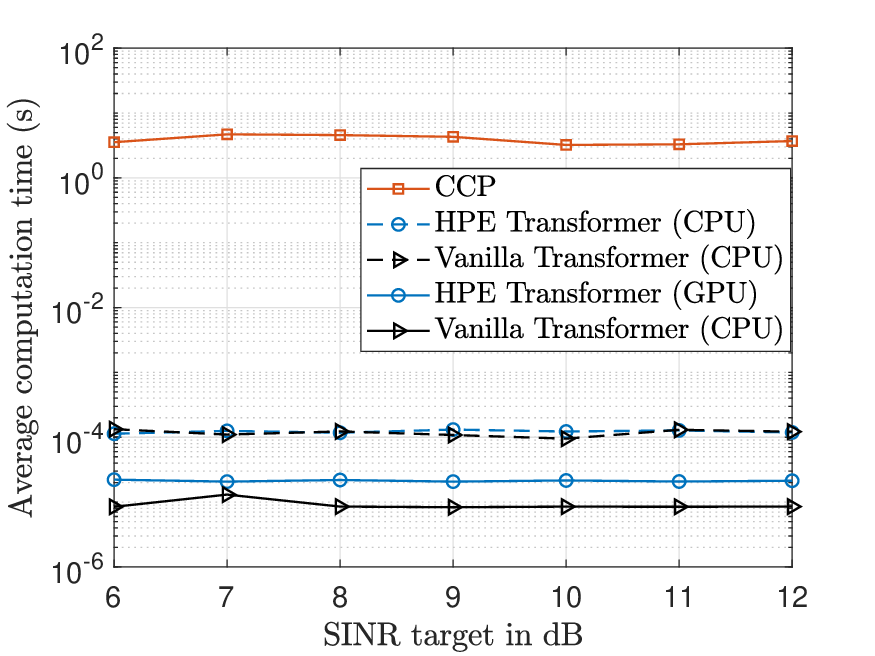}\label{T_gamma}
}
\caption{The generalization performance of the HPE transformer under different SINR targets.}\label{gamma}
\end{center}
\vspace{-0.2cm}
\end{figure*}

Finally, we demonstrate the generalization performance of the proposed HPE transformer under different
SINR targets. During the training procedure, we fix the SINR target at $10$ dB, while
we vary the SINR target in the decoding block from $6$ to $12$ dB during the test procedure
to verify the generalization performance under different SINR targets.
The number of antennas at the BS is set as $N=16$,
the number of multicast groups is $M=3$,
and the number of users in each multicast group is $K_m=2$.
The penalty hyper-parameter of the proposed HPE transformer, the vanilla transformer, and MMGNN is tuned as
$\rho=0.2$, $10$, and $10$, respectively.

We can see from Fig.~\ref{power_gamma} that
the proposed HPE transformer generalizes well on different SINR targets in terms of
the total transmit power, while the vanilla transformer \textcolor{black}{results} in a large performance gap.
Furthermore, as shown in Fig.~\ref{CV_gamma},
the constraint violation of the proposed HPE transformer is always very small under different SINR targets.
In addition to the superior generalization performance under different SINR targets,
Fig.~\ref{T_gamma} further shows that such a
generalization ability will not sacrifice the computational time, which is in sharp contrast to the
optimization-based CCP.

\section{Conclusions}
This paper proposed a deep learning-based approach for the QoS constrained multi-group multicast beamforming \textcolor{black}{design}.
By judiciously exploiting the multicast beamforming structure, the mapping function from wireless channels
to multicast beamformers was effectively decomposed into two separate mapping functions, both of which
inherently have an HPE property. To incorporate the HPE property into the neural network architecture design,
we proposed an HPE transformer with an encoding block and a decoding block for representing the corresponding two mapping functions, respectively.
We also proved that both of the blocks can guarantee the desired HPE property.
Simulation results showed that the proposed HPE transformer not only achieves
much lower transmit power and much smaller constraint violation than state-of-the-art deep learning-based approaches
for multicast beamforming design, but also takes remarkably less computational time than the optimization-based iterative algorithms.
Moreover, the proposed HPE transformer was demonstrated to generalize pretty well on different numbers of users, different numbers of multicast groups, and different SINR targets due to the HPE property.

\numberwithin{equation}{section}
\appendices

\section{The operation of $\text{MHA}$}\label{MHA_expression}
The $\text{MHA}$ operation uses $T$ attention heads to
extract the relevance among the input components. Let $\mathbf{W}^{\text{q}}_t\in\mathbb{R}^{d^{\prime}\times d}$,
$\mathbf{W}^{\text{k}}_t\in\mathbb{R}^{d^{\prime}\times d}$,
and $\mathbf{W}^{\text{v}}_t\in\mathbb{R}^{d^{\prime}\times d}$
denote three trainable matrices,
where $d^{\prime}=d/T$ and $t\in\mathcal{T}\triangleq\{1,2,\ldots,T\}$.
The $t$-th attention head computes a query, a key, and a value,
which are respectively given by
\begin{equation}\label{qkv}
\mathbf{Q}_{t}=\mathbf{W}^{\text{q}}_t\mathbf{X},~~
\mathbf{K}_{t}=\mathbf{W}^{\text{k}}_t\mathbf{X},~~
\mathbf{V}_{t}=\mathbf{W}^{\text{v}}_t\mathbf{X},~~
~~\forall~ t\in\mathcal{T}.
\end{equation}
To measure how similar the query and the key is,
each attention head $t$ computes a dot product:
\begin{equation}\label{compatibility}
\mathbf{A}_t=\frac{\mathbf{K}_{t}^\mathrm{T}\mathbf{Q}_{t}}{\sqrt{d^{\prime}}},
~~\forall~ t\in\mathcal{T}.
\end{equation}
Then, $\mathbf{A}_t$ is normalized such that the sum of each column is equal to $1$,
which is given by
\begin{equation}\label{weight}
\dot{a}_{i,j,t}=\frac{e^{a_{i,j,t}}}{\sum_{i^{\prime}=1}^I e^{a_{i^{\prime},j,t}}},~~\forall~ i\in\mathcal{I},
~~\forall~ j\in\mathcal{I},
~~\forall~ t\in\mathcal{T},
\end{equation}
where $a_{i,j,t}$ and $\dot{a}_{i,j,t}$
denote the $(i,j)$-th entry in $\mathbf{A}_t$ and $\dot{\mathbf{A}}_t$, respectively.
With $\dot{a}_{i,j,t}$ scoring the contribution of
$\mathbf{x}_i$ to $\mathbf{x}_j$, the attention value at the $t$-th attention head
is computed as
\begin{equation}\label{attention}
\mathbf{X}^{\prime}_{t}= \mathbf{V}_{t}\dot{\mathbf{A}}_{t},~~\forall~ t\in\mathcal{T}.
\end{equation}
Finally, by combining the attention values at the $T$ attention heads,
we obtain the computational result of $\text{MHA}$:
\begin{equation}\label{MHAvalue}
\text{MHA}\left(\mathbf{X}\right)=
\sum_{t=1}^T \mathbf{W}^{\text{o}}_t\mathbf{X}^{\prime}_{t},
\end{equation}
where $\mathbf{W}^{\text{o}}_t\in\mathbb{R}^{d\times d^{\prime}}$ is a trainable matrix.

\section{Proof of Theorem \ref{HPE_EN}}\label{proof1}
Let the mapping function from $\mathbf{H}$ to $\mathbf{X}^{(0)}$
given by \eqref{EM1} and \eqref{EM2} as $f^{\text{em}}(\cdot)$.
Define $\tilde{\mathbf{X}}^{(\ell)}\triangleq\left[
\tilde{\mathbf{X}}^{(\ell)}_{\pi_0(1)},\tilde{\mathbf{X}}^{(\ell)}_{\pi_0(2)},\ldots,\tilde{\mathbf{X}}^{(\ell)}_{\pi_0(M)}\right]$
and $\tilde{\mathbf{X}}^{(\ell)}_m\triangleq\left[
\mathbf{x}^{(\ell)}_{m,\pi_m(1)},\mathbf{x}^{(\ell)}_{m,\pi_m(2)},\ldots,\mathbf{x}^{(\ell)}_{m,\pi_m(K_m)}\right]$.
Since $f^{\text{em}}(\cdot)$ processes each column of $\mathbf{H}$ independently and identically,
we have
\begin{eqnarray}\label{C1}
\tilde{\mathbf{X}}^{(0)}=f^{\text{em}}\left(\tilde{\mathbf{H}}\right),
~~\forall~ \pi_{m}(\cdot):\mathcal{K}_m\rightarrow \mathcal{K}_m,
\nonumber\\\forall~ m\in\mathcal{M},~~
\forall~ \pi_{0}(\cdot):\mathcal{M}\rightarrow \mathcal{M}.
\end{eqnarray}
Furthermore, let the mapping function from $\mathbf{X}^{(\ell-1)}$ to $\mathbf{X}^{(\ell)}$
given by \eqref{H1} and \eqref{H2} as $f^{(\ell)}(\cdot)$ for all $\ell\in\mathcal{L}$.
Accordingly, we have
\begin{equation}\label{C2}
\mathbf{X}^{(\ell)}=f^{(\ell)}\left(\mathbf{X}^{(\ell-1)}\right)=
u_2^{(\ell)}\left(\textcolor{black}{\mathbf{Z}
^{(\ell)}}\right),~~\forall~ \ell\in\mathcal{L},
\end{equation}
where \textcolor{black}{$\mathbf{Z}^{(\ell)}\triangleq\left[
\mathbf{Z}_1^{(\ell)},\mathbf{Z}_2^{(\ell)},\ldots,\mathbf{Z}_M^{(\ell)}\right]$
and $\mathbf{Z}_m^{(\ell)}\triangleq
\left[\mathbf{z}_{m,1}^{(\ell)}, \mathbf{z}_{m,2}^{(\ell)}, \ldots, \mathbf{z}_{m,K_m}^{(\ell)}\right]
\triangleq u_1^{(\ell)}\left(\mathbf{X}^{(\ell-1)}_m\right)$.
Let $\tilde{\mathbf{Z}}^{(\ell)}\triangleq\left[
\tilde{\mathbf{Z}}^{(\ell)}_{\pi_0(1)},\tilde{\mathbf{Z}}^{(\ell)}_{\pi_0(2)},\ldots,\tilde{\mathbf{Z}}^{(\ell)}_{\pi_0(M)}\right]$
and $\tilde{\mathbf{Z}}^{(\ell)}_m
\triangleq\left[
\mathbf{z}^{(\ell)}_{m,\pi_m(1)},\mathbf{z}^{(\ell)}_{m,\pi_m(2)},\ldots,\mathbf{z}^{(\ell)}_{m,\pi_m(K_m)}\right]$.
Since $u_1^{(\ell)}(\cdot)$ satisfies the PE property in \eqref{PEu},
we have $\tilde{\mathbf{Z}}^{(\ell)}_m
\triangleq u_1^{(\ell)}\left(\tilde{\mathbf{X}}^{(\ell-1)}_m\right)$
for all $\pi_{m}(\cdot):\mathcal{K}_m\rightarrow \mathcal{K}_m$.
Substituting this equation into \eqref{C2}, we have
\begin{eqnarray}\label{C3}
f^{(\ell)}\left(\tilde{\mathbf{X}}^{(\ell-1)}\right)=
u_2^{(\ell)}\left(\tilde{\mathbf{Z}}^{(\ell)}
\right),~~\forall~ \ell\in \mathcal{L},
\nonumber\\
\forall~ \pi_{m}(\cdot):\mathcal{K}_m\rightarrow \mathcal{K}_m,~~\forall~ m\in\mathcal{M},~~
\forall~ \pi_{0}(\cdot):\mathcal{M}\rightarrow \mathcal{M}.
\end{eqnarray}
Furthermore, since $u_2^{(\ell)}(\cdot)$ also satisfies the PE property in \eqref{PEu},
combining \eqref{C2} and \eqref{C3} yields
\begin{eqnarray}\label{C4}
f^{(\ell)}\left(\tilde{\mathbf{X}}^{(\ell-1)}\right)=
u_2^{(\ell)}\left(\tilde{\mathbf{Z}}^{(\ell)}
\right)=\tilde{\mathbf{X}}^{(\ell)},~~\forall~ \ell\in \mathcal{L},\nonumber\\
\forall~ \pi_{m}(\cdot):\mathcal{K}_m\rightarrow \mathcal{K}_m,~~\forall~ m\in\mathcal{M},~~
\forall~ \pi_{0}(\cdot):\mathcal{M}\rightarrow \mathcal{M}.
\end{eqnarray}}
Finally, let the mapping function from $\mathbf{X}^{(L)}$ to $\left(\boldsymbol{\alpha}, \boldsymbol{\lambda}\right)$
given by \eqref{DE1} and \eqref{DE2} as $f^{\text{de}}(\cdot)$.
Since $f^{\text{de}}(\cdot)$ processes each column of $\mathbf{X}^{(L)}$
independently and identically, we have
\begin{eqnarray}\label{C5}
\left(\tilde{\boldsymbol{\alpha}}, \tilde{\boldsymbol{\lambda}}\right)=f^{\text{de}}\left(\tilde{\mathbf{X}}^{(L)}\right),
~~\forall~ \pi_{m}(\cdot):\mathcal{K}_m\rightarrow \mathcal{K}_m,\nonumber\\
\forall~ m\in\mathcal{M},~~\forall~ \pi_{0}(\cdot):\mathcal{M}\rightarrow \mathcal{M}.
\end{eqnarray}
Combining \eqref{C1}, \eqref{C4}, and \eqref{C5} yields the HPE property given in
\eqref{HPE1}.

\section{Proof of Theorem \ref{HPE_DE}}\label{proof2}
Let the mapping function from $\left(\mathbf{H}, \boldsymbol{\alpha}, \boldsymbol{\lambda}\right)$ to $\mathbf{W}^{(0)}$
given by \eqref{SC1} (or equivalently \eqref{SC2}) as $g^{\text{sc}}(\cdot,\cdot,\cdot)$.
Substituting $\left(\tilde{\mathbf{H}}, \tilde{\boldsymbol{\alpha}}, \tilde{\boldsymbol{\lambda}}\right)$ into the right-hand side of
\eqref{SC1}, we have \eqref{D1} on the next page,
\begin{figure*}[!t]
\normalsize
\begin{eqnarray}\label{D1}
 && \left(\mathbf{I}_N+\sum_{j\in\mathcal{M}}
\sum_{k\in\mathcal{K}_{\pi_0(j)}}
\lambda_{\pi_0(j),\pi_k(k)}\gamma_{\pi_0(j)}\mathbf{h}_{\pi_0(j),\pi_k(k)}\mathbf{h}_{\pi_0(j),\pi_k(k)}^\mathrm{H}\right)^{-1}\tilde{\mathbf{H}}_{\pi_0(m)}\tilde{\boldsymbol{\alpha}}_{\pi_0(m)}
\nonumber\\
&=&
\left(\mathbf{I}_N+\sum_{j\in\mathcal{M}}
\sum_{k\in\mathcal{K}_j}
\lambda_{j,k}\gamma_{j}\mathbf{h}_{j,k}\mathbf{h}_{j,k}^\mathrm{H}\right)^{-1}\mathbf{H}_{\pi_0(m)}\boldsymbol{\alpha}_{\pi_0(m)}=\mathbf{w}_{\pi_0(m)}^{(0)},
~~\forall~ m\in\mathcal{M},
\end{eqnarray}
\hrulefill
\vspace*{4pt}
\end{figure*}
where the first equality holds due to the permutation invariance of the summation, and the second equality holds due to \eqref{SC1}.
Define $\tilde{\mathbf{W}}^{(r)}\triangleq\left[
\mathbf{w}^{(r)}_{\pi_0(1)},\mathbf{w}^{(r)}_{\pi_0(2)},
\ldots,\mathbf{w}^{(r)}_{\pi_0(M)}\right], \forall~ r\in\{0,1,\ldots,R\}$.
Rewriting \eqref{D1} into a compact form, we have
\begin{eqnarray}\label{D2}
\tilde{\mathbf{W}}^{(0)}=g^{\text{sc}}\left(\tilde{\mathbf{H}}, \tilde{\boldsymbol{\alpha}}, \tilde{\boldsymbol{\lambda}}\right),~~
\forall~ \pi_{m}(\cdot):\mathcal{K}_m\rightarrow \mathcal{K}_m,\nonumber\\ \forall~ m\in\mathcal{M},~~
\forall~ \pi_{0}(\cdot):\mathcal{M}\rightarrow \mathcal{M}.
\end{eqnarray}
Furthermore, let the mapping function from $\left(\mathbf{H}, \mathbf{W}^{(r-1)}\right)$ to $\mathbf{W}^{(r)}$
given by \eqref{CAr} as $g^{(r)}(\cdot,\cdot), \forall~ r\in\mathcal{R}\triangleq\mathcal\{1,2,\ldots,R\}$.
According to the definition of $V\left(\mathbf{H},\mathbf{W}\right)$ in \eqref{SINR3},
we have
\begin{eqnarray}\label{D3}
V\left(\mathbf{H},\mathbf{W}^{(r-1)}\right)&=&V\left(\tilde{\mathbf{H}},\tilde{\mathbf{W}}^{(r-1)}\right),\nonumber\\
\nabla_{\mathbf{w}_m}V\left(\mathbf{H},\mathbf{W}^{(r-1)}\right)&=&\nabla_{\mathbf{w}_{\pi_0(m)}}V\left(\tilde{\mathbf{H}},\tilde{\mathbf{W}}^{(r-1)}\right),
\nonumber\\\forall~ r\in\mathcal{R}.
\end{eqnarray}
Substituting \eqref{D3} into \eqref{CAr} yields
\begin{eqnarray}\label{D4}
\tilde{\mathbf{W}}^{(r)}=g^{(r)}\left(\tilde{\mathbf{H}}, \tilde{\mathbf{W}}^{(r-1)}\right),~~\forall~ r\in\mathcal{R},
\nonumber\\
\forall~ \pi_{m}(\cdot):\mathcal{K}_m\rightarrow \mathcal{K}_m,~~\forall~ m\in\mathcal{M},~~
\forall~ \pi_{0}(\cdot):\mathcal{M}\rightarrow \mathcal{M}.
\end{eqnarray}
\textcolor{black}{Define $\dot{g}^{(R)}(\cdot,\cdot,\cdot)$ as
\begin{eqnarray}\label{D5}
\mathbf{W}^{(R)}&=&g^{(R)}\left(
\mathbf{H}, g^{(R-1)}\left(
\cdots g^{(1)}\left(
\mathbf{H}, g^{\text{sc}}\left(\mathbf{H},\boldsymbol{\alpha}, \boldsymbol{\lambda}
\right)
\right)
\right)
\right)\nonumber\\
&\triangleq&\dot{g}^{(R)}\left(\mathbf{H},\boldsymbol{\alpha}, \boldsymbol{\lambda}\right).
\end{eqnarray}
Substituting \eqref{D2} and \eqref{D4} into \eqref{D5}, we have
\begin{eqnarray}\label{D6}
\tilde{\mathbf{W}}^{(R)}=\dot{g}^{(R)}\left(\tilde{\mathbf{H}},\tilde{\boldsymbol{\alpha}}, \tilde{\boldsymbol{\lambda}}\right), ~~
\forall~ \pi_{m}(\cdot):\mathcal{K}_m\rightarrow \mathcal{K}_m,\nonumber\\ \forall~ m\in\mathcal{M},~~
\forall~ \pi_{0}(\cdot):\mathcal{M}\rightarrow \mathcal{M},
\end{eqnarray}
which yields the HPE property in \eqref{HPE2}.
}



\bibliographystyle{IEEEtran}
\bibliography{IEEEabrv,LiYang}

\end{document}